\newif\ifdraft
\definecolor{darkblue}{rgb}{0,0,0.5}
\renewcommand{\secpar}{\lambda}
\renewcommand\subsubsection{\@startsection{subsubsection}{3}{\z@}%
                       {-18\p@ \@plus -4\p@ \@minus -4\p@}%
                       {0.5em \@plus 0.22em \@minus 0.1em}%
                       {\normalfont\normalsize\bfseries\boldmath}}
\renewcommand{\@Opargbegintheorem}[4]{%
  #4\trivlist\item[\hskip\labelsep{#3#2\@thmcounterend}]}
\Crefname{experiment}{Experiment}{Experiments}
\Crefname{construction}{Construction}{Constructions}
\Crefname{conjecture}{Conjecture}{Conjectures}
\let\llncssubparagraph\subparagraph
\let\subparagraph\paragraph
\let\subparagraph\llncssubparagraph
\titleformat*{\paragraph}{\normalsize\bfseries}
\newcommand{\Uppi}{\mathrm{\Pi}}
\newcommand{\Upphi}{\mathrm{\Phi}}
\newcommand{\Uppsi}{\mathrm{\Psi}}
\newcommand{\qpt}{\mathsf{QPT}}
\newcommand{\ooracle}{\mathcal{O}}
\newcommand{\goodO}{\mathcal{O}} %
\newcommand{\X}{\mathcal{X}} %
\newcommand{\Y}{\mathcal{Y}} %
\newcommand{\hset}{\mathcal{H}} %
\newcommand{\hsup}{\widehat{\supp}^H(\ket{\phi})} %
\newcommand{\hmax}[2][H]{\hat{h}^{#1}_{max}\left(#2\right)}
\renewcommand{\epsilon}{\varepsilon}
\newcommand{\goodstate}[1][\yspace, \delta, d, N]{(#1)}
\newcommand{\eve}{\mathsf{Eve}}
\newcommand{\alice}{\mathsf{A}}
\newcommand{\bob}{\mathsf{B}}
\newcommand{\jstate}{\Uppsi} \newcommand{\simulatedA}{\phi_A^E} %
\newcommand{\simulatedB}{\phi_B^E} %
\newcommand{\realA}{\phi_{\alice}} \newcommand{\realB}{\phi_{\bob}}
\newcommand{\m}{\psi} %
\newcommand{\mi}[1][i]{\psi_{#1}} %
\newcommand{\qi}[1][i]{q_{#1}} %
\newcommand{\simulatedm}{\psi^E} %
\newcommand{\h}{h} %
\newcommand{\Aop}{\ensuremath{\alice_{\mathsf{fin}}}} %
\newcommand{\Bop}{\ensuremath{\bob_{\mathsf{fin}}}}
\newcommand{\simulatedoutput}{{\tau}}
\newcommand{\simulatedoutputu}{\widetilde{\tau}} %
\newcommand{\realoutput}{\Phi_A}
\newcommand{\keeps}{\varepsilon} %
\newcommand{\Pik}[1][\key]{\Uppi_{#1}}
\newcommand{\keerr}{\nu} %
\newcommand{\kA}{\ensuremath{\key_{\alice}}}
\newcommand{\kB}{\ensuremath{\key_{\bob}}}
\newcommand{\khA}{\ensuremath{\key_{\alice'}}}
\newcommand{\evestate}{\ensuremath{\ket{\mathrm{\Psi}_{t}^{\eve}}}}
\newcommand{\jointstate}{\ensuremath{\ket{\mathrm{\widehat{\Psi}}_{t, L}}}}
\newcommand{\jjointstate}[1][]{\ensuremath{\ket{\mathrm{\widehat{\Psi}^{(i)}}_{t, L#1}}}}
\newcommand{\ketbra}[2]{{\vert#1\rangle\!\langle#2\vert}}
\newcommand{\kb}[1]{\ketbra{#1}{#1}}
\newcommand{\bk}[2]{\braket{#1|#2}}
\DeclareMathOperator{\Tr}{Tr}
\newcommand{\reg}{}
\newcommand{\unitary}{U}
\newcommand{\ident}{\mathcal{I}}
\let\supp\relax
\DeclareMathOperator{\supp}{supp}
\newcommand{\size}[1]{\left|#1\right|}
\DeclareMathOperator*{\opEE}{\mathbb{E}}
\newcommand{\EE}[1]{\ensuremath{\opEE\left[#1\right]}}
\newcommand{\randvar}[1]{\bm{\mathrm{#1}}}
\renewcommand\prob[2][]{\Pr_{#1}\left[#2\right]}
\DeclareMathOperator{\pr}{Pr} %
\newcommand\given[1][]{\:#1\vert\:}
\newcommand{\xspace}{\mathcal{X}}
\newcommand{\yspace}{\mathcal{Y}}
\newcommand{\eps}{\varepsilon}
\newcommand{\epshl}{\eps} %
\newcommand{\querylist}[1]{\ensuremath{\mathcal{Q}_{#1}}}
\newcommand{\qalist}[1]{#1}
\DeclareMathAlphabet{\mathpzc}{OT1}{pzc}{m}{it}
\newcommand{\qgen}{\ensuremath{\mathpzc{Gen}}}
\newcommand{\qenc}{\ensuremath{\mathpzc{Enc}}}
\newcommand{\qdec}{\ensuremath{\mathpzc{Dec}}}
\newcommand{\qc}{\ensuremath{\mathpzc{qc}}}
\title{Towards the Impossibility of Quantum Public Key Encryption with Classical Keys from One-Way Functions%
}
  \author{}
  \institute{}
  \date{}
  \author{Samuel Bouaziz--Ermann\inst{1}
    \and Alex B. Grilo\inst{1}
    \and Damien Vergnaud\inst{1}
    \and Quoc-Huy Vu\inst{2}}
    \institute{Sorbonne Universit\'e, CNRS, LIP6, France \and Léonard de Vinci Pôle Universitaire, Research Center, 92 916 Paris La Défense, France}
    \authorrunning{S. Bouaziz--Ermann et al.}
\begin{document}
\maketitle
\begin{abstract}
  There has been a recent interest in proposing quantum protocols whose security
  relies on weaker computational assumptions than their classical counterparts.
  Importantly to our work, it has been recently shown that public-key encryption
  (PKE) from one-way functions (OWF) is possible if we consider quantum public
  keys.
  Notice that we do not expect classical PKE from OWF given the impossibility
  results of Impagliazzo and Rudich (STOC'89).

  However, the distribution of quantum public keys is a challenging task.
  Therefore, the main question that motivates our work is if quantum PKE from
  OWF is possible if we have classical public keys.
  Such protocols are impossible if ciphertexts are also classical, given the
  impossibility result of Austrin {\em et al.}(CRYPTO'22) of quantum enhanced
  key-agreement (KA) with classical communication.

  In this paper, we focus on black-box separation for PKE with classical public
  key and quantum ciphertext from OWF under the polynomial compatibility
  conjecture, first introduced in Austrin {\em et al.}.
  More precisely, we show the separation when the decryption algorithm of the
  PKE does not query the OWF.
  We prove our result by extending the techniques of Austrin {\em et al.}
  and we show an attack for KA in an extended classical communication model
  where the last message in the protocol can be a quantum state.
\end{abstract}

\section{Introduction}
After decades of focusing on the possibility of information-theoretically secure
quantum protocols, initiated by the land-marking results on money
schemes~\cite{Wiesner83} and key-agreement~\cite{BB84}, there has been recent
progress in understanding how quantum resources can be used to implement
cryptographic primitives under weaker computational assumptions.

More concretely, it has been shown in~\cite{EC:GLSV21,C:BCKM21a} that Oblivious
Transfer (OT) and Multi-party computation (MPC), two central primitives in
cryptography can be constructed from one-way functions (OWF), the weakest
classical cryptographic assumption.
Such a result has been extended to show OT and MPC can be constructed from
pseudo-random quantum states~\cite{C:AnaQiaYue22}, which is expected to be a
weaker computational assumption than OWF~\cite{Kre21,STOC:KQST23}.
This is in stark contrast with the classical case since we do not expect OT and
MPC to be built from one-way functions~\cite{STOC:ImpRud89}.

More recently, it has been asked if quantum protocols are possible for
public-key encryption from OWF (or weaker assumptions).
While the conditional impossibility result for key-agreement
of~\cite{C:ACCFLM22} implies that public-key encryption (PKE) from OWF with classical communication is
impossible even if the honest parties are quantum,\footnote{Such a result is actually conditioned on a conjecture that we state in \Cref{conjecture:pcc} and discuss in
  \Cref{sec:discussion}.} it has been recently shown that PKE
can be constructed from OWF if we have a quantum
public-key~\cite{Coladangelo23,BGHMSVW23,KMNY23}.
However, having a quantum public key is not ideal, given the issues that appear
with public-key distribution, authentication, and reusability.
These results leave then as an open question if quantum PKE from OWF is possible
with a classical public key and quantum ciphertext.

In this work, we extend the result of \cite{C:ACCFLM22} and we show that
key agreement is impossible when Alice and Bob exchange classical messages and
at the very last round, Bob sends a quantum message to Alice.
Our result holds under the same conjecture as~\cite{C:ACCFLM22} but is limited
to the setting where Alice does not query the random oracle in the last round of the protocol.
More concretely, we achieve the following result.

\begin{theorem}[Informal]\label{theorem:ka}
  Let $\Pi$ be a key agreement protocol between Alice and Bob, where they first
  exchange classical messages and at the last round Bob sends a quantum message,
  and Alice and Bob agree on a key $k$.
  Let $n$ be the number of queries that Alice and Bob make to a random oracle
  $\ooracle$.
  Then, assuming Alice does not query the oracle after receiving the quantum
  message from Bob, Eve can find $k$ with $\bigO{\poly[n]}$ classical queries to
  $\ooracle$ with probability $\frac{1}{\poly[n]}$.
\end{theorem}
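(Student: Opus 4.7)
The plan is to reduce the problem to the purely classical attack of Austrin--Chung--Chung--Fu--Lin--Mahmoody by exploiting the structural restriction that Alice makes no oracle queries after receiving Bob's quantum message. Under this restriction, Alice's key is a deterministic function of her classical view at the end of the classical phase together with Bob's quantum message, and she applies no further oracle-dependent operations. Hence, to recover $k$, Eve does not need to reproduce the quantum message itself: by correctness of the KA protocol, it suffices for her to simulate Bob all the way through his final round and read off his key, which equals $k$ with overwhelming probability.

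First, I would instantiate the Austrin \emph{et al.}\ attack on the classical part of the transcript between Alice and Bob. That attack makes $\bigO{\poly[n]}$ classical queries to $\ooracle$ and, under the polynomial compatibility conjecture, outputs a set $S$ of query--answer pairs which, with probability $1/\poly[n]$, contains every intersection query made (even quantum-queried) during the classical phase by both Alice and Bob. Next, conditioned on $S$, Eve samples a fresh ``simulated Bob'' $\widetilde{\bob}$: random coins together with an oracle that is consistent with $\ooracle$ on $S$ and that reproduces the public classical transcript. She then runs $\widetilde{\bob}$'s last round internally; this round may prepare a quantum register to be sent and then measure some of Bob's own registers to obtain a classical candidate key $\widetilde{k}_{\bob}$, which Eve outputs.

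The heart of the proof is showing that $\widetilde{k}_{\bob} = k$ with probability $1/\poly[n]$. One half comes from correctness of $\Pi$, which says $k_{\bob} = k_{\alice} = k$ in the real protocol. The other half is the indistinguishability between the real Bob and the simulated Bob as seen from the oracle: because $S$ captures all intersection queries, the view of Bob under the simulated oracle has the same distribution as under $\ooracle$, conditioned on the classical transcript. This is exactly the kind of ``shadow execution'' analysis carried out by Austrin \emph{et al.}; what is new is that Bob's last round may make quantum queries and may leave a quantum message unsent (to Eve). The former is handled by the polynomial compatibility conjecture, which allows Eve's classical estimate of query weights to dominate Bob's quantum queries up to polynomial loss. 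The latter is where Alice's no-query restriction becomes crucial: the quantum message is never fed back into any oracle-dependent computation by the honest parties, so Eve does not need to consistency-check her simulation against the action of Alice's final unitary on that message; she only needs her simulated Bob to internally output the key.

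The main obstacle I anticipate is precisely the interaction between Bob's quantum queries in the last round and the quantum register that he sends. Naively, entanglement between Bob's key register and the message register could mean that $\widetilde{k}_{\bob}$ depends in a delicate way on queries that the Austrin \emph{et al.}\ analysis does not single out as ``heavy''. I expect the right way to control this is to argue, via a purified description of Bob's last step and an application of PCC to both Bob's pre-measurement queries and his measurement, that the marginal distribution of $\widetilde{k}_{\bob}$ conditioned on the transcript and $S$ is $1/\poly[n]$-close to that of $k_{\bob}$. Once this closeness is established, combining it with correctness yields $\widetilde{k}_{\bob} = k$ with probability $1/\poly[n]$, completing the proof.
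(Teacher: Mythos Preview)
There is a genuine gap in your plan: your Eve never uses Bob's quantum message. You propose to simulate a fresh Bob consistent with the classical transcript and the learned set $S$, and to output that simulated Bob's key $\widetilde{k}_{\bob}$. But the quantum message may carry the key itself, or oracle answers that determine it, on inputs that are \emph{not} heavy conditioned on the classical transcript. A one-line counterexample: the classical phase is empty; Bob samples a uniformly random $x$, queries the oracle to obtain $y=\ooracle(x)$, sends $\ket{x}\ket{y}$ to Alice, and outputs $y$; Alice (with no oracle query) outputs the second register. No single $x$ is $\epshl$-heavy here, so your learner returns $S=\emptyset$, and your simulated Bob outputs an independent uniform $y'$; the real key $y$ is sitting in the message you discarded. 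The paper's attack avoids exactly this pitfall by \emph{using the real message}: Eve runs the heavy-query learner to produce a simulated Alice state $\ket{\simulatedA}$ and then applies Alice's final (oracle-free) unitary $\Aop$ to $\ket{\simulatedA}\otimes\m$ and measures the key. The role of the no-query restriction is that $\Aop$ does not touch the oracle register, so the $\goodstate$-state property survives the application of $\Aop$, and PCC can be invoked on the post-$\Aop$ state.

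Your invocation of PCC is also off target. PCC does not say that simulated and real views are close in distribution; it says that any two $\goodstate$-states share an oracle in their computational-basis supports. In the paper this is used by contradiction: if, after applying $\Aop$ to $\ket{\simulatedA}\otimes\m$, both key outcomes had weight at least $\keerr$, the two conditioned states would still be good, hence compatible, yielding a single oracle under which Alice outputs both $0$ and $1$ with positive probability on the \emph{same} message, contradicting perfect correctness (Bob's key being already fixed). Finally, note that even the sufficient condition you aim for---that the marginal of $\widetilde{k}_{\bob}$ be close to that of $k_{\bob}$---would not be enough: in the counterexample above both marginals are uniform yet independent, so matching marginals does not give you $\widetilde{k}_{\bob}=k$.
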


 With this result in hand, we show that quantum PKE is impossible with a
classical public key in the Quantum Random Oracle Model (QROM), when the decryption algorithm does not query
the random oracle.

\begin{corollary}[Informal]\label{cor:qpke}
  Assume \(\left(\qgen,\qenc,\qdec\right)\) is a Public-Key Encryption scheme, where the
  public key is classical and the ciphertext is a quantum state.
  Assuming the algorithms $\qgen$ and $\qenc$ makes at most $n$ quantum
  queries to a random oracle $\ooracle$, then there exists an algorithm $\eve$
  that can decipher by making $\bigO{\poly[n]}$ classical queries to $\ooracle$.
\end{corollary}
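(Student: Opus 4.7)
The plan is to reduce an attack on the PKE scheme to the key-agreement attack from \Cref{theorem:ka} via a direct transformation: any such PKE with classical public key, quantum ciphertext, and oracle-free decryption is essentially a one-and-a-half round key agreement that fits the hypotheses of the theorem exactly.

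Concretely, I will construct the following key-agreement protocol $\Uppi$ from $(\qgen, \qenc, \qdec)$. Alice runs $(pk, sk) \gets \qgen^{\ooracle}$ and sends the classical string $pk$ to Bob; Bob samples a uniformly random key $k$, computes the quantum ciphertext $ct \gets \qenc^{\ooracle}(pk, k)$, and sends $ct$ to Alice as his last (quantum) message; Alice outputs $\qdec(sk, ct)$, which by assumption requires no queries to $\ooracle$. By correctness of the PKE, Alice and Bob agree on the key $k$.

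The next step is to verify that $\Uppi$ falls inside the scope of \Cref{theorem:ka}: the only non-classical message is Bob's final one, and crucially all of Alice's oracle queries occur during $\qgen$, before she sends $pk$; she makes no query after receiving Bob's quantum ciphertext because $\qdec$ is oracle-free. The total number of oracle queries is at most $2n$, since $\qgen$ and $\qenc$ each make at most $n$ queries. Applying \Cref{theorem:ka}, there is an adversary $\eve$ making $\bigO{\poly[n]}$ classical queries to $\ooracle$ that outputs $k$ with probability at least $1/\poly[n]$.

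The last step is to argue this breaks semantic security of the PKE. Since $k$ is uniformly random, recovering $k$ from the public transcript $(pk, ct)$ with inverse-polynomial probability immediately contradicts the standard IND-CPA / one-wayness security of the scheme (one can, e.g., have Bob encrypt a random bit and have $\eve$ output the first bit of its guess). I do not anticipate any serious obstacle here: the reduction is syntactic, and the hypothesis that $\qdec$ does not query the oracle is exactly what is needed to land in the regime of \Cref{theorem:ka} where Alice is quiet after Bob's quantum message. The only point that deserves care is matching conventions, namely checking that the formal statement of \Cref{theorem:ka} allows the initial classical round to be sent by Alice (carrying $pk$) before Bob's quantum message, and that the query budget $n$ in \Cref{theorem:ka} can be taken as the sum of the query budgets of $\qgen$ and $\qenc$ up to a constant factor.
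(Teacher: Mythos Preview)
Your proposal is correct and matches the paper's own argument essentially line for line: both build the two-message CC1QM key-agreement protocol where Alice sends $\pk$ from $\qgen$, Bob replies with $\qenc(\pk,k)$ for a random bit $k$, and Alice decrypts without oracle queries, then invoke \Cref{theorem:ka} to obtain the attacker and conclude that IND-CPA is violated. The only cosmetic difference is that the paper phrases it as a contradiction to IND-CPA security (and explicitly uses one-bit keys), whereas you phrase it as directly constructing a deciphering adversary.
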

Using known techniques from black-box separation, our results can easily
be translated to give separations of qPKE from black-box OWF.
We also note that our result (\cref{cor:qpke}) marks an initial step towards
proving the conjecture of~\cite{EPRINT:MorYam22c} on the possibility of
black-box constructions of qPKE with classical public keys from quantum
symmetric key encryption.

\subsection{Technical Overview}

To prove \Cref{theorem:ka}, we start with a key-agreement protocol with perfect correctness where Alice and Bob have quantum access to a random oracle and exchange polynomially many rounds of classical messages, and then Bob sends a final quantum message $\ket{\psi}$ to Alice.
\footnote{In this overview, we consider for simplicity that the last message sent by Bob is a pure state,
denoted as \(\ket{\m}\). In our formal proof, we consider the general case in which the message is a mixed state.}

We show an attack where with inverse polynomial probability:
\begin{enumerate}
\item Given the classical transcript and $\ket{\psi}$, Eve guesses the key $k$ that Alice and Bob would share.
\item Eve sends a quantum state $\psi^E$ to Alice such that Alice agrees on the key $k$ at the end of the protocol.
\end{enumerate}

While the first item is sufficient to break the key agreement protocol, the second item allows us to show a much stronger attack: Eve is an
  active adversary that not only retrieves the key but also does it in a way that Alice and Bob will not detect later since both of them share the same key.

To prove the first item, we use the same technique as \cite[Construction
4.10]{C:ACCFLM22}, which queries all of the ``$\epshl$-heavy queries'' to the random oracle.
This approach is similar to the classical approach of \cite{C:BarMah09}, whose
construction queries all of the values that are queried by Alice and Bob with
probability \textit{at least} $\epshl$.
These are called the ``$\epshl$-heavy queries'',
and with the
right parameter of $\epshl$, one can show that it allows Eve to find all of the
\textit{intersection queries} with high probability, that is the values queried by both Alice and Bob.
A problem that appears in the quantum setting is that the notion of
\textit{intersection queries} is unclear, as Alice and Bob are allowed to query
the oracle in superposition, it is thus hard to precisely define what
information Alice and Bob know about the oracle.
This means that a definition for quantum intersection queries
  must be such that the knowledge of the intersection queries is sufficient to find the key with high probability, which is a strong property.
In the quantum attack of \cite{C:ACCFLM22}, they start by defining the
\textit{quantum} heavy queries. Roughly, these are the queries with high amplitude (see
\Cref{def:heavyquery} for a formal definition), which is the natural quantum equivalent of the
classical definition that appears in \cite{C:BarMah09}.
To replace the notion of intersection queries, which is problematic in the quantum setting, they propose the Polynomial Compatibility Conjecture (PCC). This conjecture implies that if a pair of quantum states satisfy some conditions, then there exists a random oracle that is consistent with the transcript and with both quantum states.
In their attack, Eve learns all of the heavy queries and generates a simulation of the states of Alice and Bob.
They show that if $\eve$ is not able to retrieve the key with high probability from these simulated states, then the protocol does not have perfect correctness, leading to a contradiction. This is shown by proving that if Eve does not find the correct key, by the PCC, there exists an oracle $h$ that is consistent with both states,  and therefore there exists an execution of the protocol where Alice outputs keys $1$ and Bob outputs key $0$.

We extend this result to the case where the last message is quantum.
Similar to \cite[Construction 4.10]{C:ACCFLM22}, we define the quantum-heavy
query learner algorithm, which is formally defined in
\Cref{construction:simulation} and whose goal is to query all of the
$\epshl$-heavy queries.
In this overview, we consider for simplicity that the last message sent by Bob is a pure state,
denoted as \(\ket{\m}\).
\(\eve\) then run Alice's last step of the protocol (which is publicly defined by the protocol)
on the simulated internal state of Alice that Eve generated and the
quantum message from Bob.
We then need to show that Eve will still be
able to guess the good key with high probability.
This translates to showing that, for some noticeable parameter $\keerr$:
\begin{equation}
  \label{eq:findthekey}
  \Tr\left(\Pik \Aop \left(\kb{\simulatedA}_{W'_A} \otimes  \kb{\h}_H \otimes \kb{\m}_M \right) \left(\Aop\right)^{\dagger}\right) \geq 1 - \keerr,
\end{equation}
where the register $W'_{A}$ contains $\eve$'s simulated state of Alice
$\ket{\simulatedA}$, register $H$ contain the superposition of all possible oracles
that are consistent with $\eve$'s knowledge of the real oracle, and register $M$ contains the
message $\ket{\m}$ that Bob sent to Alice.
The unitary $\Aop$ corresponds to Alice's operation in the last step of the
protocol, after she receives Bob's message, and the projector $\Pik$ measures
the key.
\Cref{eq:findthekey} translates to saying that given the \textit{real} message
that Bob sends to Alice, Eve can find Bob's key by applying the
operations that Alice would have applied on the simulated state of Alice
$\ket{\simulatedA}$ that she obtained by using the quantum-heavy queries
learner.
This equation is proven in~\cref{sec:findthekey}.

The proof of \Cref{eq:findthekey} comes from the fact that since Alice does not
query the oracle when she applies the operator \(\Aop\) after receiving the
quantum message \(\m\) from Bob, then the register \(H\) is unchanged and thus
the resulting state keeps the properties necessary to apply the PCC.

For the second item, we need to define the state $\simulatedm$ that Eve sends to Alice. The idea is the following: Eve will pick the post-measurement state of the measurement described in \Cref{eq:findthekey}, and she applies $\Aop^\dagger$ to it. Then, Eve traces out the registers $W'_A$ and $H$ and $\simulatedm$ is the remaining state in register $M$.

To show that Alice computes the
  same key as Bob and Eve with high probability, we show that $\simulatedm$ is close to $\ket{\psi}$:
  \begin{equation}
  \label{eq:closemessage}
  \bra{\m}\simulatedm \ket{\m} \geq 1 - \keerr.
\end{equation}

Using \Cref{eq:closemessage} and the perfect correctness of the protocol, we
then show that Alice and Bob will agree on the same key with high probability.
This corresponds to proving the following inequality:
\begin{equation}
  \label{eq:agrees}
  \Tr\left(\Pik \Aop (\kb{\realA}_{W_{A}} \otimes \kb{\h}_H \otimes \simulatedm) \left(\Aop\right)^{\dagger}\right) \geq 1 - \keerr,
\end{equation}
where $\ket{\realA}$ is Alice's real internal state and $\simulatedm$ is the
message that Eve sends to Alice.
\Cref{eq:agrees} translates to saying that given the message of $\eve$, Alice
will find the same key as $\eve$ with high probability when she does her final computation.
These equations are proven in~\cref{section:breakingccqc}.

Finally, \Cref{cor:qpke} follows from the fact that if public key encryption with quantum ciphertexts is
possible, then we can construct a key agreement protocol: Alice sends the public key and Bob answers with the encryption of a random key $k$.

\subsection{Related Works, Discussion and Open Problems}
\label{sec:discussion}

\paragraph{The Polynomial Compatibility Conjecture.} First introduced in
\cite{C:ACCFLM22}, the Polynomial Compatibility Conjecture (PCC) is already
known to imply separation results for key agreement~\cite{C:ACCFLM22} and
non-interactive commitments~\cite{EC:ChuLinMah23}.
The conjecture has an alternative expression that uses polynomials and is
equivalent to the statement in \Cref{conjecture:pcc}.
The PCC is known to be true with exponential parameters~\cite{C:ACCFLM22}, but it is still open
with polynomial parameters.
Proving it would be interesting as it would now also establish the separation result
for quantum PKE, along with potentially more results as it is a
strong statement.

\paragraph{Quantum Public Key Encryption.} 
Classically, public key encryption
(PKE) cannot be constructed from black-box one-way functions~\cite{STOC:ImpRud89}.
In the quantum context, various definitions of quantum PKE exist, leading to
different feasibility outcomes.
With quantum public keys and classical ciphertexts, quantum PKE can be
constructed from one-way functions~\cite{Coladangelo23,BGHMSVW23,KMNY23}.
However, it remains unclear how
the distribution of such public keys could be effectively distributed in practice among different parties.
Our result focuses on quantum PKE with classical public
keys and quantum ciphertexts.
In this setting, the distribution of public keys could be implemented using currently available public key infrastructure (PKI).
Moreover, compared to having a quantum public key and a classical ciphertext, having a classical public key and a quantum ciphertext is less problematic for implementations, as the message is
supposed to be received by only one party and thus the potential destruction of
the message after the decryption is inconsequential.
With this definition of quantum PKE, we achieve a step towards proving a similar result as the classical case.

\paragraph{Classical Communication One Quantum Message Key Agreement Protocols.} In our
paper, we introduce a scheme that we call Classical Communication One Quantum
Message Key Agreement (CC1QM-KA) protocols.
In these types of protocol, Alice and Bob communicate classically, except for
the last message that is quantum.
We show that key agreement is impossible with this type of protocol in the QROM
if Alice does not query the random oracle after receiving the last message.
One natural question is what happens if we allow the \textit{first} message to
be quantum, while the rest of the communication is classical.
Interestingly enough, \cite{BB84} falls into this category of protocol, thus key
agreement is possible \textit{unconditionally} in this setting.
This asymmetry in terms of feasibility results is quite surprising and a
possible explanation is the fact that we cannot postselect on quantum messages, i.e. generate a state that is consistent with an algorithm and one of its output register being a specific quantum state.
Indeed, the classical $\eve$ attacks imply a simulation of the internal state of
the parties that is consistent with the message, which corresponds to computing an internal state postselected on the classical messages that are communicated. %
With a quantum message, this would be possible with a classical description of the quantum message since $\eve$ is unbounded,
but it is non-trivial with only the quantum state as $\eve$ must learn what the quantum message is in the first place somehow. %
However, in the CC1QM setting, we do not need to do this postselection, as a
simulation of the last part of the protocol is enough to find the right key.

\paragraph{Allowing oracle queries in the decryption algorithm}
To prove the stronger result that qPKE is impossible even when the decryption
algorithms query the oracle, one needs to show an attack on CC1QM-KA protocols
where Alice queries the oracle in the last part of the protocol.
At first glance, one may think that \Cref{eq:findthekey} should be true even if Alice makes queries to
the oracle in \(\Aop\), because every new information about the oracle that she
learns at this stage of the protocol will not be transmitted to Bob since there
is no communication afterward.
However, some issues that do not appear in \cite{C:ACCFLM22} arise when trying
to prove such an inequality.

The first (natural) problem is that since the last message is quantum, Eve cannot compute the heavy queries (which would be sufficient for the attack). Therefore, we need to find another way of simulating Alice's last oracle calls without learning the heavy queries.

A first attempt is to use the operator \(\Aop^{\goodO}\), that corresponds to
Alice's computation in the last step of the protocol with the \emph{real} oracle
\(\goodO\).
Because this corresponds to the operation that the real Alice would have done and the real outcome is deterministic (since the protocol has perfect correctness), it could allow Eve to find the real key. However, the problem in this approach is that Eve has her simulated state that was constructed using a {\em simulated} oracle (with correct values for heavy-queries) and Alice's algorithm could use some consistency check that would fail when we decide to change the oracle.

On the other hand, if we want to use the simulated oracle instead of the real
oracle, then there is a trivial protocol for which the attack does not work.
In this protocol, Bob just picks a random
value \(x\in\X\), queries it, and sends \(\ket{x}\) to Alice. Alice and Bob agree then on the key  \(H(x)\).
By using the simulation oracle, Eve would not be able to find the key with non-negligible probability.

While these two complications are artificial since they do not lead to a secure protocol, they put a barrier to finding a common attack that would make Eve find the keys from Alice and Bob.

\section{Preliminaries}
\subsection{Notation}
The following notations will be used throughout the paper,
\begin{itemize}
  \item By \(\secpar\) we denote the security parameter.
  \item We use calligraphic letters (e.g., \(\mathcal{X}\)) to denote sets.
    We use \(\yspace^{\xspace}\) to denote the set of all functions from
    \(\xspace\) to \(\yspace\).
  \item We use bold letters (e.g., \(\randvar{m}\)) to denote random variables
    and distributions.
    We write \(m \sample \randvar{m}\) to denote that \(m\) is sampled from the
    distribution \(\randvar{m}\).
    We write \(m \sample \mathcal{M}\) to denote that \(m\) is sampled uniformly from the set \(\mathcal{M}\).
  \item We use the Dirac notation for pure states, e.g., \(\ket{\psi}\), while
    mixed states will be denoted by lowercase Greek letters, e.g., \(\rho\).
\end{itemize}
For the basics of quantum computation, we refer readers
to~\cite{nielsen2010quantum}.

\subsection{Quantum Computation}

\begin{definition}[Oracle-aided quantum algorithms]
  A quantum algorithm \(\adv\) is a family of quantum circuits
  \(\adv \coloneqq \{A_{\secpar}\}_{\secpar \in \NN}\) that act on three sets of
  registers: input registers \(\reg{X}\), output registers \(\reg{Y}\), and work
  registers \(\reg{Z}\).
  For convenience, we let \(\reg{W} \coloneqq (\reg{X}, \reg{Y}, \reg{Z})\)
  denote the \emph{internal registers} of \(\adv\).
  For each input \(x \in \bin^{\secpar}\), the output is computed by running the
  algorithm \(A_{\secpar}\) on
  \(\ket{x}_{\reg{X}}\ket{0}_{\reg{Y}}\ket{0}_{\reg{W}}\) and at the end the
  output registers are measured in the computational basis to obtain the output.

  A \(d\)-query quantum oracle algorithm \(\adv^{h}\) that has access to an
  oracle \(h\), defined by the unitary \(\ooracle_{h}\) can be specified by a
  sequence of unitary matrices \((\unitary_{d}, \unitary_{d-1}, \ldots, \unitary_{0})\).
  The final state of the algorithm is defined as
  \(\unitary_{d}\ooracle_{h}\unitary_{d-1}\ooracle_{h}\ldots\ooracle_{h}\unitary_{0}\ket{x}_{\reg{X}}\ket{0}_{\reg{Y}}\ket{0}_{\reg{Z}}\).
  When the oracle \(h\) implements some classical function
  \(h: \xspace \to \yspace\), the corresponding query operator \(\ooracle_{h}\) is
  defined as
  \(\ket{x}_{\reg{X}}\ket{y}_{\reg{Y}} \mapsto \ket{x}_{\reg{X}} \ket{y \oplus h(x)}_{\reg{Y}}\).

  When \(\adv^{h}\) is clear from the context, we omit the superscript \(h\) and
  write \(\adv\).
\end{definition}

The following preliminary is borrowed from the formalization
of~\cite{C:ACCFLM22}.

\begin{definition}[The computational and the Fourier basis]
  Let \(\yspace\) be a finite Abelian group with cardinality \(\size{\yspace}\).
  Let \(\{\ket{y}\}_{y \in \yspace}\) be an orthonormal basis of
  \(\CC^{\size{\yspace}}\).
  We refer to this basis as the computational basis.
  Let \(\hat{\yspace}\) be the dual group which is known to be isomorphic to
  \(\yspace\).
  Recall that a member \(\hat{y} \in \hat{\yspace}\) is a character function
  (i.e., a function from \(\yspace\) to the multiplicative group of non-zero
  complex numbers).
  The Fourier basis \(\{\ket{\hat{y}}\}_{\hat{y} \in \hat{\yspace}}\) of
  \(\CC^{\size{\yspace}}\) is defined as
  \begin{equation*}
    \ket{\hat{y}} = \frac{1}{\sqrt{\size{\yspace}}} \sum_{y \in \yspace} \hat{y}(y)^{\dagger}\ket{y}
    \text{ and }
    \ket{{y}} = \frac{1}{\sqrt{\size{\yspace}}} \sum_{\hat{y} \in \hat{\yspace}} \hat{y}(y)\ket{\hat{y}}.
  \end{equation*}
\end{definition}

\begin{definition}[Functions and their (quantum) representations]
  For any function \(h \in \yspace^{\xspace}\), we define its quantum
  representation to be
  \(\ket{h}_{H} \coloneqq \bigotimes_{x \in \xspace}\ket{h(x)}_{H_{x}}\) in the
  computational basis, where the register \(H_{x}\) is associated with
  \(\CC^{\yspace}\) for all \(x \in \xspace\), and the register \(H\) is
  compounded of all \(H_{x}\).
  Similarly, for any \(\hat{h} \in \hat{\yspace}^{\xspace}\) we define
  \(\ket{\hat{h}}_{H} \coloneqq \bigotimes_{x \in \xspace} \ket{\hat{h}(x)}_{H_{x}}\)
  in the Fourier basis.
\end{definition}

Zhandry~\cite{C:Zhandry19} shows that the purified random oracle is perfectly
indistinguishable from the (standard) quantum random oracle, and thus instead of
considering the query operator \(\ooracle_{h}\), we can consider another
equivalent query oracle \(\ooracle\) acting on three registers
\(\reg{X}, \reg{Y}, \reg{H}\) as follows.
\begin{equation*}
  \ket{x}_{\reg{X}}\ket{y}_{\reg{Y}}\ket{h}_{\reg{H}} \mapsto \ket{x}_{\reg{X}}\ket{y \oplus h(x)}_{\reg{Y}}\ket{h}_{\reg{H}},
\end{equation*}
where the oracle register \(H\) is initialized as
\(\ket{\Upphi_{0}}_{H} = \sum_{h \in \hset} \frac{1}{\sqrt{\size{\hset}}}\ket{h}_{H}\).

Note that in the Fourier basis, the unitary \(\ooracle\) acts as follows:
\begin{equation*}
  \ket{x}_{X}\ket{\hat{y}}_{Y}\ket{\hat{h}}_{H} \mapsto \ket{x}_{X}\ket{\hat{y}}_{Y} \bigotimes_{x' \in \X} \ket{\hat{h}(x') - \delta_{x,x'} \cdot \hat{y}}_{H},
\end{equation*}
where \(\delta_{x,x'}\) is equal to \(1\) if \(x=x'\), and \(0\) otherwise, and the
oracle register \(H\) is initialized as
\(\ket{\Upphi_{0}}_{H} = \bigotimes_{x \in \X} \ket{\hat{0}}_{H_{x}}\).

\begin{definition}[Purified view of two-party protocols in the QROM]
  \label{def:purifiedview}
  A two-party protocol in the Quantum-Computation Classical-Computation (QCCC)
  model is a protocol in which two quantum algorithms, Alice and Bob, can query
  the oracle, apply quantum operation on their internal registers, and send
  classical strings over the public (authenticated) channel to the other party.
  The sequence of the strings sent during the protocol is called the
  \emph{transcript} of the protocol.
  Let \(\reg{W}_{A}\) and \(\reg{W}_{B}\) be Alice's and Bob's internal
  registers, respectively.
  Let \(\mathcal{H} \coloneqq \yspace^{\xspace}\).
  For any two-party protocol, we define its purified version as follows.
  \begin{itemize}
    \item If the protocol is inputless, start with
      \(\ket{0}_{\reg{W}_{A}}\ket{0}_{\reg{W}_{B}}\sum_{h \in \mathcal{H}}\frac{1}{\sqrt{\size{\mathcal{H}}}}\ket{h}_{\reg{H}}\).
      Otherwise, if Alice takes as input a classical string \(a \in \xspace\) and
      Bob takes as input a classical string \(b \in \xspace\), start with
      \(\ket{a}_{\reg{W}_{A}}\ket{b}_{\reg{W}_{B}}\sum_{h \in \mathcal{H}}\frac{1}{\sqrt{\size{\mathcal{H}}}}\ket{h}_{\reg{H}}\).
    \item Alice and Bob run the protocol in superposition, that is, all the
      measurements (including those used for generating the transcript) are
      delayed and the query operator \(\ooracle_{h}\) is replaced by
      \(\ooracle\).
    \item Let \(\ket{\jstate}_{\reg{W}_{A}\reg{W}_{B}\reg{H}}\) denote the state
      at the end of the protocol, and let
      \(\ket{\jstate_{t}}_{\reg{W}_{A}\reg{W}_{B}\reg{H}}\) denote the
      post-measurement state of
      \(\ket{\jstate}_{\reg{W}_{A}\reg{W}_{B}\reg{H}}\) which is consistent with
      the transcript \(t\).
  \end{itemize}
\end{definition}

We now define some properties related to this new register for the database $\ket{h}_H$.

\begin{definition}[Non-zero queries in Fourier basis]
  Let \(\yspace\) be a finite Abelian group and \(\hat{\yspace}\) be the dual
  group.
  For any \(\hat{y} \in \hat{\yspace}^{\xspace}\), we define the size of
  \(\hat{h}\) to be
  \begin{equation*}
    \size{\hat{h}} \coloneqq \size{\{ x: x \in \xspace, \hat{h}(x) \neq \hat{0}\}}.
  \end{equation*}
\end{definition}

\begin{definition}[Oracle support]
  \label{def:support}
  Let \(\hat{\mathcal{H}} \coloneqq \hat{\yspace}^{\xspace}\).
  For any vector
  \(\ket{\phi}_{\reg{W}\reg{H}} = \sum_{w, \hat{h} \in \hat{\mathcal{H}}}\alpha_{w, \hat{h}} \ket{w}_{\reg{W}}\ket{\hat{h}}_{\reg{H}}\),
  we define the \emph{oracle support in the Fourier basis} of \(\ket{\phi}\) as
  \begin{equation*}
    \hsup \coloneqq \left\{\hat{h}: \exists w \text{ s.t. } \alpha_{w,\hat{h}} \neq 0 \right\}.
  \end{equation*}
  We denote \(\hmax{\ket{\phi}}\) the function \(\hat{h} \in \hsup\) that has the
  largest size \(\size{\hat{h}}\) (if such function is not unique, by default we
  pick the lexicographically first one).
  The definition extends naturally when the register \(\reg{W}\) does not exist.

  Similarly, if we write the oracle part in the computational basis
  \(\ket{\phi}_{\reg{W}\reg{H}} = \sum_{w, h \in \mathcal{H}} \beta_{w, h}\ket{w}_{\reg{W}}\ket{h}_{\reg{H}}\),
  then we define the oracle support in the computational basis of \(\ket{\phi}\) as
  \begin{equation*}
    \supp^{H}(\ket{\phi}) \coloneqq \left\{h: \exists w \text{ s.t. } \beta_{w, h} \neq 0 \right\}.
  \end{equation*}
\end{definition}

\begin{definition}
  A partial oracle \(L\) is a partial function from \(\X\) to \(\Y\).
  The domain of \(L\) is denoted by \(Q_{L} \coloneqq \textrm{dom}(L)\).
  Equivalently, we view \(L\) as a finite set of pairs \((x, y_{x}) \in \X \times \Y\)
  such that for all \((x, y_{x}), (x', y'_{x}) \in L, x \neq x'\).
  We say a partial oracle \(L\) is consistent with \(h: \X \to \Y\) if and only if
  \(h(x) = y_{x}\) holds for all \(x \in Q_{L}\).

  For any partial oracle \(L\), we define the associated projector \(\Uppi_{L}\)
  by
  \begin{equation*}
    \Uppi_{L} \coloneqq \bigotimes_{x \in Q_{L}} \ketbra{y_{x}}{y_{x}}_{H_{x}} \bigotimes_{x \notin Q_{L}} \ident_{H_{x}},
  \end{equation*}
  where \(\ident_{H_{x}}\) is the identity operator acting on \(H_{x}\).
  It holds that \(\Uppi_{L}\ket{h}_{H} = \ket{h}_{H}\) if \(h\) is consistent
  with \(L\), and \(\Uppi_{L}\ket{h}_{H} = 0\) otherwise.
\end{definition}

\begin{lemma}
  \label{lemma:sparse}
  If \(\alice\) asks at most \(d\) queries to the superposition oracle, then for
  all possible outcomes of \(\alice\)'s intermediate measurements, the joint
  state \(\ket{\psi}_{WH}\) conditioned on the outcome satisfies
  \(\size{\hmax{\ket{\psi}}} \leq d\).
\end{lemma}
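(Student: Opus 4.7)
\medskip

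\noindent\textbf{Proof plan.} The plan is to do induction on the number $k \le d$ of queries Alice has made so far, maintaining the invariant that the joint state on $\reg{W}\reg{H}$, conditioned on any sequence of intermediate measurement outcomes on $\reg{W}$, has Fourier oracle support consisting only of functions $\hat{h}$ with $\size{\hat{h}} \le k$. The base case $k=0$ is immediate: the oracle register is initialized to $\ket{\Upphi_{0}}_{H} = \bigotimes_{x \in \X} \ket{\hat{0}}_{H_{x}}$, so the only $\hat{h}$ in the support is the all-zero function, which has size $0$.

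For the inductive step, I would track the evolution of the state between step $k-1$ and step $k$, which consists of Alice's unitaries and possibly an intermediate measurement followed by one invocation of $\ooracle$. Alice's unitaries act only on $\reg{W}$ (and the query registers $\reg{X},\reg{Y}$, which are part of $\reg{W}$), so they leave the set of $\hat{h}$ appearing with non-zero amplitude unchanged, only redistributing the amplitudes $\alpha_{w,\hat{h}}$ among the $w$'s. Intermediate measurements are performed on $\reg{W}$; projecting onto an outcome can only zero-out some coefficients $\alpha_{w,\hat{h}}$, so the oracle support after conditioning is a subset of what it was before. In both cases the bound on $\size{\hat{h}}$ is preserved.

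The only non-trivial case is the query itself. Recall the action of $\ooracle$ in the Fourier basis from the excerpt:
\begin{equation*}
  \ket{x}_{X}\ket{\hat{y}}_{Y}\ket{\hat{h}}_{H} \mapsto \ket{x}_{X}\ket{\hat{y}}_{Y}\bigotimes_{x' \in \X}\ket{\hat{h}(x') - \delta_{x,x'}\cdot \hat{y}}_{H_{x'}}.
\end{equation*}
So the query sends a basis element indexed by $\hat{h}$ to a basis element indexed by a new function $\hat{h}'$ that agrees with $\hat{h}$ everywhere except possibly at the coordinate $x$ being queried. Hence $\size{\hat{h}'} \le \size{\hat{h}} + 1$: either $x$ was already in the support of $\hat{h}$ (and $\size{\hat{h}'} \le \size{\hat{h}}$) or it was not (and $\size{\hat{h}'} \le \size{\hat{h}}+1$). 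Writing the pre-query state as $\sum_{x,\hat{y},w,\hat{h}} \alpha_{x,\hat{y},w,\hat{h}}\ket{x}_X \ket{\hat{y}}_Y \ket{w}_W \ket{\hat{h}}_H$ with each $\size{\hat{h}} \le k-1$ by the induction hypothesis, the post-query Fourier oracle support consists only of functions of size at most $k$. Applying the induction up to $k=d$ gives the claimed bound $\size{\hmax{\ket{\psi}}} \le d$.

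The main subtlety to get right is the interaction between intermediate measurements and the definition of $\hsup$: after a measurement, one renormalizes the conditioned state, which can change individual amplitudes but only \emph{removes} basis vectors from the Fourier support, never adds new ones; this is the step that I would be most careful to write out formally, since $\hsup$ is defined via non-zero amplitudes and one must rule out the possibility that re-normalization somehow reintroduces a larger-support $\hat{h}$. Beyond that the argument is a mechanical induction driven entirely by the explicit Fourier-basis action of $\ooracle$.
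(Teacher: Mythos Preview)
Your argument is correct and is exactly the standard induction on the number of queries, driven by the Fourier-basis action of the superposition oracle. The paper itself does not prove this lemma: it is stated in the preliminaries without proof, as a known fact taken over from the formalization of Austrin et al., so there is nothing to compare against beyond noting that your proof supplies what the paper omits.

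One minor remark: the ``subtlety'' you flag about renormalization after a measurement is not actually a subtlety. Conditioning on an outcome projects the state and then multiplies by a single nonzero scalar; scalar multiplication cannot turn a zero amplitude into a nonzero one, so the Fourier oracle support after renormalization is exactly the support after projection, which as you already observed is a subset of the pre-measurement support. You can state this in one line rather than treating it as the delicate step.
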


\begin{lemma}
  \label{lemma:sparse2}
  Given a state \(\ket{\psi}_{WH}\) and a partial oracle \(L\), the state
  \(\Uppi_{L}\ket{\psi}_{WH}\) can be written as
  \begin{equation*}
    \Uppi_{L}\ket{\psi}_{WH} \coloneqq \sum_{w \in \mathcal{W}, \hat{h} \in \hat{\mathcal{H}}'} \alpha'_{w, \hat{h}} \ket{w}_{W} \bigotimes_{x \notin Q_{L}} \ket{\hat{h}(x)}_{H_{x}} \bigotimes_{x \in Q_{L}}\ket{y_{x}}_{H_{x}},
  \end{equation*}
  where \(\hat{\mathcal{H}}'\) is the set of functions from \(\X \setminus Q_{L}\) to
  \(\hat{\Y}\).
  Furthermore, if \(\size{\hmax{\ket{\psi}}} \leq d\), then
  \(\size{\hmax[H']{\Uppi_{L}\ket{\psi}}} \leq d\), where \(H'\) is the set of
  registers corresponding to \(\X \setminus Q_{L}\).
\end{lemma}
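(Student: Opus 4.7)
The plan is to expand $\ket{\psi}_{WH}$ in the Fourier basis on the oracle register and then compute the action of $\Uppi_{L}$ using the tensor product structure. Each Fourier basis state $\ket{\hat{h}}_H$ factorizes as $\bigotimes_{x \in \X}\ket{\hat{h}(x)}_{H_x}$, while $\Uppi_L$ acts as the identity on registers $H_x$ for $x \notin Q_L$ and projects onto $\ket{y_x}_{H_x}$ for $x \in Q_L$. First I would split each $\hat{h} \in \hat{\mathcal{H}}$ as a pair $(\hat{h}_L, \hat{h}')$ consisting of its restrictions to $Q_L$ and to $\X \setminus Q_L$, so that the two pieces can be treated independently.

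Next, I would use the definition of the Fourier basis to evaluate $\braket{y_x | \hat{h}(x)} = \hat{h}(x)(y_x)^{\dagger}/\sqrt{\size{\Y}}$ for each $x \in Q_L$. Applied to each Fourier basis state, this replaces the tensor factor on $H_x$ by a scalar multiple of the computational basis state $\ket{y_x}_{H_x}$, while leaving the tensor factors for $x \notin Q_L$ untouched. Gathering together all $\hat{h}$ sharing the same restriction $\hat{h}'$ then yields the claimed form, with each new amplitude $\alpha'_{w, \hat{h}'}$ equal to a sum of $\alpha_{w, (\hat{h}_L, \hat{h}')}$ weighted by the corresponding product of character values over $Q_L$.

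For the size bound, the key observation is monotonicity of Fourier support under restriction: for any $\hat{h}$ extending $\hat{h}'$ we have $\size{\hat{h}'} \leq \size{\hat{h}}$, since restricting the domain cannot create new non-zero entries. If $\alpha'_{w, \hat{h}'} \neq 0$ then at least one summand in its defining sum is nonzero, so there exists $(\hat{h}_L, \hat{h}')$ with $\alpha_{w, (\hat{h}_L, \hat{h}')} \neq 0$, and by the hypothesis $\size{\hmax{\ket{\psi}}} \leq d$ this $\hat{h}$ satisfies $\size{\hat{h}} \leq d$. Combining the two inequalities gives $\size{\hat{h}'} \leq d$, and since this holds for every $\hat{h}'$ in the Fourier support of $\Uppi_L \ket{\psi}$, we conclude $\size{\hmax[H']{\Uppi_L\ket{\psi}}} \leq d$.

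I do not anticipate any real obstacle: the first part is a mechanical Fourier-to-computational conversion on the $Q_L$-registers, and the second part reduces to the trivial fact that restriction only drops non-zero entries. The one subtlety worth noting is that cancellations in the sum defining $\alpha'_{w, \hat{h}'}$ could make it vanish even when several $\alpha_{w, (\hat{h}_L, \hat{h}')}$ are nonzero; but since we only need the one-direction implication ``nonzero $\alpha'$ implies existence of a nonzero $\alpha$'', this causes no issue.
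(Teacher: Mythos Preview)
Your proposal is correct and is exactly the natural argument one would give for this lemma. Note, however, that the paper does not actually supply a proof of \cref{lemma:sparse2}: it is stated in the preliminaries (alongside \cref{lemma:sparse}) as a background fact, presumably inherited from~\cite{C:ACCFLM22}, and used later without further justification. So there is no ``paper's own proof'' to compare against; your write-up would serve perfectly well as the missing argument.
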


\subsection{Quantum-Heavy Queries Learner}
We now define the \textit{quantum-heavy queries learner} algorithm.
It was first defined in \cite[Construction 4.10]{C:ACCFLM22}, which can be seen
as the quantum counterpart of the classical \emph{independence learner} of
\cite{C:BarMah09}, where Eve learns all the \textit{\(\epshl\)-heavy
  queries} of both Alice and Bob.

\begin{definition}[Quantum \(\epshl\)-heavy queries~{\cite[Definition
    4.9]{C:ACCFLM22}}]
  \label{def:heavyquery}
  For \(x \in \xspace\), define the projector
  \begin{equation*}
    \Pik[x] \coloneqq \sum_{\hat{y} \in \hat{\yspace} \setminus \{\hat{0}\}}\ketbra{\hat{y}}{\hat{y}}_{\reg{H}_x}.
  \end{equation*}
  Given a quantum state \(\ket{\psi}_{\reg{W}_{A}\reg{W}_{B}\reg{H}}\), the
  weight of any \(x \in \xspace\) is defined as
  \begin{equation*}
    w(x) \coloneqq \norm{\Pik[x]\ket{\psi}}^2,
  \end{equation*}
  that is, the quantum heaviness of \(x\) is the probability of obtaining a
  non-\(\hat{0}\) outcome while measuring \(\reg{H}_{x}\) in the Fourier basis.
  We call \(x \in \X\) a \emph{quantum \(\epshl\)-heavy} query if
  \(w(x) \geq \epshl\).
\end{definition}

\begin{construction}[Quantum-heavy queries learner~\cite{C:ACCFLM22}]
  \label{construction:simulation}
  Let \((\alice, \bob)\) be an inputless two-party QCCC protocol relative to a
  random oracle \(h\), in which Alice and Bob make at most \(d\) quantum queries
  to the oracle.
  Given the transcript \(t\), (computationally-unbounded) attacking algorithm \(\eve\) is parameterized by
  \(\epshl\) and works as follows.
  \begin{enumerate}
    \item Let \(L\) denote the set of oracle query-answer pairs obtained by
      \(\eve\) from the oracle, and \(\querylist{L}\) is defined similarly while
      only containing the queries.
      Initially prepare \(\qalist{L} = \emptyset\) and the classical description of the
      state
      \begin{equation*}
        \ket{\psi}_{W'_{\alice}W'_{\bob}H'} = \ket{0}_{W'_{\alice}}\ket{0}_{W'_{\bob}}\ket{\Upphi_0}_{H'},
      \end{equation*}
      where \(\ket{\Upphi_{0}}\) is a uniform superposition over all \(h \in \mathcal{H}\),
      \(W'_{\alice}\) , \(W'_{\bob}\) and \(H'\) are the simulated registers for
      Alice, Bob, and the oracle prepared by \(\eve\).
    \item Simulate the state evolution during the protocol.
      Concretely, \(\eve\) calculates the state in \(W'_{\alice}W'_{\bob}H'\)
      after each round in the protocol.
      Whenever \(\eve\) encounters the moments in which Alice (Bob) sends their
      message, \(\eve\) calculates the post-measurement state that is consistent
      with \(t\).
    \item While there is any query \(x \notin \querylist{L}\) that is quantum
      \(\epshl\)-heavy conditioned on \((t, \qalist{L})\), do the following:
      \begin{enumerate}
        \item Ask the lexicographically first quantum \(\epshl\)-heavy query
          \(x\) from the real oracle \(h\).
        \item Update the state in \(W'_{\alice}W'_{\bob}H'\) to the
          post-measurement state that is consistent with \((x, h(x))\).
        \item Update \(\qalist{L}\) by adding \((x, h(x))\) to \(\qalist{L}\).
      \end{enumerate}
    \item When there is no quantum \(\epshl\)-heavy query left to ask, \(\eve\)
      outputs the simulated quantum state
      \(\ket{\psi_{t}}_{W'_{\alice}W'_{\bob}H'}\) and her list \(\qalist{L}\),
      conditioned on the transcript \(t\).
  \end{enumerate}
\end{construction}

\begin{remark}
  We note that~\cref{construction:simulation} described above is almost
  identical to \cite[Construction 4.10]{C:ACCFLM22}.
  The only difference is that \(\eve\) outputs the simulated state,
  which can be constructed from the classical description that \(\eve\) has
  computed, along with the list of queries she made to the oracle.
\end{remark}

The technical properties of the quantum-heavy queries learner
in~\cref{construction:simulation} are stated in the following lemma.
\begin{lemma}[{\cite{C:ACCFLM22}}]
  \label{lemma:eve}
  For any \(0 < \eps < 1 \), the quantum-heavy queries learner described
  in~\cref{construction:simulation} satisfies the following properties:
  \begin{itemize}
    \item \textbf{Efficiency:} \(\EE{\size{\qalist{L}}} \leq \frac{d}{\eps}\),
      where the expectation is over the randomness of the oracle and the
      algorithm \(\eve\).
    \item \textbf{Security:} When the learner stops and learns a list
      \(\qalist{L}\), there is no \(x \in \querylist{L}\) that is \(\eps\)-quantum
      heavy in the purified view of \(\eve\) conditioned on knowing
      \(\qalist{L}\) and the transcript \(t\).
  \end{itemize}
\end{lemma}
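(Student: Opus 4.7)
My plan is to prove the two parts separately, adapting the potential-function analysis of \cite{C:ACCFLM22}.

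For the efficiency bound, I would introduce a potential $\Phi$ equal to the expected Hamming weight of the Fourier support of the simulated oracle register on coordinates outside $Q_L$. Concretely, given the current simulated state $\ket{\psi}_{W'_A W'_B H'}$ and the set of learned positions $Q_L$, let
\begin{equation*}
\Phi \coloneqq \EE{\size{\{x \notin Q_L : \hat{h}(x) \neq \hat{0}\}}},
\end{equation*}
where $\hat{h}$ is sampled by measuring the registers $H'_x$ for $x \notin Q_L$ in the Fourier basis. Initially $\Phi=0$ since $\ket{\Upphi_0}$ equals $\ket{\hat{0}}$ at every coordinate. By \Cref{lemma:sparse}, each simulated query by Alice or Bob can increase the Fourier support size by at most one, so $\Phi \leq d$ throughout the run.

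When Eve identifies an $\eps$-heavy query $x$ and projects onto $\ket{h(x)}_{H'_x}$, the contribution of coordinate $x$ to $\Phi$ was at least $\eps$ by the definition of heaviness, and after learning, $x$ is removed from the count. For any coordinate $x' \neq x$, a projection of $H'_x$ onto a computational-basis state commutes with the Fourier-basis measurement at $H'_{x'}$ since they act on disjoint tensor factors; averaging over the learning outcome therefore leaves the marginal Fourier distribution on $H'_{x'}$ unchanged, so the contribution of $x'$ to $\Phi$ is preserved in expectation. \Cref{lemma:sparse2} additionally shows that the sparsity structure is preserved after projection. Summing, each learning step decreases $\Phi$ by at least $\eps$ in expectation, and combining with $\Phi \leq d$ yields $\eps \cdot \EE{\size{\qalist{L}}} \leq d$, giving the claimed bound.

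The security property is then immediate from the termination condition of \Cref{construction:simulation}: the loop in Step 3 exits only when no query $x \notin \querylist{L}$ is $\eps$-heavy in $\ket{\psi_t}_{W'_A W'_B H'}$ conditioned on $(t, \qalist{L})$. To complete the argument, one verifies that Eve's simulated post-measurement state is identically distributed to her purified view conditioned on $(t, \qalist{L})$, which follows directly from \Cref{def:purifiedview} together with the fact that each of Eve's update steps implements the appropriate projector. The main obstacle is the careful accounting in the efficiency step: one must rule out any interaction in which projection at $H'_x$ could, through entanglement with $W'_A$ or $W'_B$, inflate the expected Fourier weight at a distinct coordinate $x'$. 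I expect this to go through by unitary invariance of the marginal and the tensor-factor structure of the $H$ register, but this is the place where care is most needed.
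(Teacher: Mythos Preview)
The paper does not give its own proof of this lemma; it is cited from \cite{C:ACCFLM22} and used as a black box, so there is nothing in the present paper to compare your argument against. Your sketch is the standard potential argument from that source and is correct.

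One reassurance on the point you flag as delicate: for $x' \neq x$, the Fourier-basis projector $\Pi_{x'}$ on $H'_{x'}$ commutes with every computational-basis projector $P_y$ on $H'_x$, so
\[
\sum_y p_y \, w_y(x') \;=\; \sum_y \bigl\|\Pi_{x'} P_y \ket{\psi}\bigr\|^2 \;=\; \bigl\|\Pi_{x'} \ket{\psi}\bigr\|^2 \;=\; w(x').
\]
Entanglement with the work registers can make $w_y(x')$ vary across outcomes $y$, but the average is exactly pinned; there is no mechanism for inflation in expectation. Your appeal to \cref{lemma:sparse2} is not actually needed for the efficiency bound (that lemma controls sparsity after projection, not heaviness), though it does no harm.

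Note also that the statement as printed contains a typo: the security clause should read ``no $x \notin \querylist{L}$'' rather than ``$x \in \querylist{L}$'', as is clear both from the termination condition of \cref{construction:simulation} and from how the lemma is invoked in the proof of \cref{lemma:simulation}. Your reading of the security property is the intended one.
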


\subsection{Polynomial Compatibility Conjecture}
In this section, we recall the Polynomial Compatibility Conjecture (PCC)
of~\cite{C:ACCFLM22}.
The formulation we use here is based on quantum states.

\begin{definition}[\(\goodstate\)-state~{\cite[Definition 4.1]{C:ACCFLM22}}]
  \label{def:goodstates}
  Let \(H\) be a register over the Hilbert space \({\Y}^{N}\).
  A quantum state \(\ket{\psi}\) over registers \(\reg{W}\) and \(\reg{H}\) is a
  \emph{\(\goodstate\)-state} if it satisfies the following two conditions:
  \begin{itemize}
    \item \textbf{\(d\)-sparsity}: \(\size{\hmax{\ket{\psi}}} \leq d\).
      This means that for any measurement of registers \(H\) in the Fourier
      basis, and \(W\) in any basis, the oracle support in the Fourier basis is
      at most \(d\).
    \item \textbf{\(\delta\)-lightness}: For every \(x \in \xspace\), if we measure the
      \(\reg{H}_x\) register of \(\ket{\psi}\) in the Fourier basis, the
      probability of getting \(\hat{0}\) is at least \(1 - \delta\).
      This mean that \(\ket{\psi}\) has no \(\delta\)-heavy queries.
  \end{itemize}
\end{definition}

\begin{definition}[Compatible states {\cite[Definition 4.2]{C:ACCFLM22}}]
  \label{def:compatibility}
  Two quantum states \(\ket{\phi}\) and \(\ket{\psi}\) over registers \(\reg{W}\) and
  \(\reg{H}\) are \emph{compatible} if their oracle supports in the
  computational basis (as defined in~\cref{def:support}) have non-empty
  intersection, i.e., if \(\supp^H(\ket{\phi}) \cap \supp^H(\ket{\psi}) \neq \emptyset\).
\end{definition}

We now state the conjecture.
\begin{conjecture}[Polynomial compatibility conjecture~{\cite[Conjecture
    4.3]{C:ACCFLM22}}]
  \label{conjecture:pcc}
  There exists a finite Abelian group \(\yspace\) and \(\delta = 1/\poly[d]\) such
  that for any \(d, N \in \NN\), it holds that any two
  \(\goodstate[\yspace, \delta(d), d, N]\)-states \(\ket{\phi}\) and \(\ket{\psi}\) are
  compatible.
\end{conjecture}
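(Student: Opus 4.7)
Since this final statement is a conjecture that the authors themselves flag as open in the polynomial regime (only the exponential-parameter analog is currently known, as noted in \Cref{sec:discussion}), the ``proof'' I propose is really a plan of attack rather than a complete argument, and I would build it around the polynomial reformulation alluded to in the discussion. The plan is to translate the computational oracle support $\supp^H(\ket{\phi})$ of a $\goodstate$-state into the non-vanishing locus of a low-degree polynomial over $\yspace$, and then argue that two such loci must intersect.

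Concretely, I would first reduce to the case where the $W$ register is absent by projecting onto a single classical branch $\ket{w}_W$; since $\supp^H(\ket{\phi}) = \bigcup_w \supp^H(\ket{\phi_w})$, intersections between any pair of branches suffice. By $d$-sparsity the resulting $H$-state has the form $\sum_{\hat h} \alpha_{\hat h} \ket{\hat h}$ with each $\hat h$ supported on at most $d$ coordinates $x_1, \ldots, x_k$. Expanding into the computational basis via the character expansion shows that the computational amplitude on $\ket{h}$ is
\begin{equation*}
  P_\phi(h) \;\coloneqq\; \sum_{\hat h} \alpha_{\hat h} \prod_{i} \hat h(x_i)\bigl(h(x_i)\bigr),
\end{equation*}
which, once we identify characters with monomials, is a polynomial of total degree at most $d$ in the group-element variables $h(x_1), \ldots, h(x_k)$. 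Hence $h \in \supp^H(\ket{\phi})$ iff $P_\phi(h) \neq 0$. Performing the analogous construction for $\ket{\psi}$, compatibility reduces to showing that $P_\phi \cdot P_\psi$ is not identically zero on $\yspace^k$.

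Since $\ket{\phi}$ and $\ket{\psi}$ are nonzero vectors, neither $P_\phi$ nor $P_\psi$ is the zero polynomial, and a Schwartz--Zippel estimate over $\yspace$ bounds the fraction of zeros of $P_\phi \cdot P_\psi$ by $2d/|\yspace|$. This already settles the conjecture whenever $|\yspace|$ is super-polynomial in $d$, recovering the known exponential case. To push down to $|\yspace| = \poly[d]$ and $\delta = 1/\poly[d]$, I would try to leverage $\delta$-lightness, which quantitatively forces the Fourier coefficients $\alpha_{\hat h}$ to be spread out rather than conspiring into a polynomial that is degenerate (e.g., that vanishes on large affine subvarieties of $\yspace^k$). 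Combining this with \Cref{lemma:sparse,lemma:sparse2}, which control how the Fourier support evolves under partial-oracle projection, one would hope to derive a robust non-vanishing statement of the form: two $\delta$-light, $d$-sparse polynomials cannot simultaneously vanish on all but a $1/\poly[d]$ fraction of $\yspace^k$.

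The main obstacle is precisely this last quantitative trade-off: Schwartz--Zippel over a polynomially-sized group is far too lossy, and by itself $\delta$-lightness only constrains single-coordinate marginals, not joint vanishing patterns on $k$-tuples. I therefore expect the hard part to be either (i) exploiting a specific algebraic structure of $\yspace$ so that the polynomials $P_\phi$ are sparse in a way that admits sharper anti-concentration, or (ii) replacing Schwartz--Zippel altogether with a direct combinatorial/probabilistic argument in which the lightness assumption is used head-on to control the joint distribution of $(h(x_1), \ldots, h(x_k))$ on the complement of the zero set. Without such a new ingredient, one inevitably falls back to the exponential regime already handled in~\cite{C:ACCFLM22}, and indeed the persistence of the conjecture suggests that neither of these two approaches is straightforward.
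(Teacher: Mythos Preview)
The paper does not prove this statement at all: it is stated as \Cref{conjecture:pcc}, explicitly attributed to~\cite{C:ACCFLM22}, and used throughout as an \emph{assumption} (e.g., in \cref{theorem:mainka}, \cref{lemma:breakingccqc}, and \cref{lemma:simulation}). The discussion in \cref{sec:discussion} confirms that only the exponential-parameter version is known and that the polynomial case remains open. So there is no ``paper's own proof'' to compare against.

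You correctly identify this situation and present your write-up as a plan of attack rather than a proof, which is the right call. Your polynomial reformulation and the Schwartz--Zippel argument for the exponential regime are in line with what~\cite{C:ACCFLM22} already does (the paper alludes to this equivalent polynomial expression in \cref{sec:discussion}), and you are honest that the gap between that and the polynomial regime is exactly the open problem. There is nothing to correct here beyond noting that no comparison to the paper's proof is possible, since none exists; your proposal should be read as speculative commentary on an open conjecture rather than as a proof attempt to be graded against the paper.
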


\subsection{Useful Lemmas}
We will use the following lemma frequently in our proofs in subsequent sections.
\begin{lemma}[Independence~{\cite[Lemma 3.2]{C:ACCFLM22}}]
  \label{lemma:independency}
  Suppose two quantum algorithms \(\alice\) and \(\bob\) interact classically in
  the quantum random oracle model.
  Let \(\reg{W}_{\alice}\) and \(\reg{W}_{\bob}\) denote their internal
  registers respectively.
  Then, at any time during the protocol, conditioned on the (classical)
  transcript \(t\) and the fixed oracle \(h \in \hset\), the joint state of the
  registers \(\reg{W}_{\alice}\) and \(\reg{W}_{\bob}\) conditioned on \(t\) and
  \(h\) is a product state.
\end{lemma}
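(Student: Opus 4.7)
The plan is to prove the lemma by induction on the number of elementary operations in the purified view of the protocol (\Cref{def:purifiedview}). The guiding observation is that once we condition on the oracle register \(H\) being in state \(\ket{h}\), every purified query \(\ooracle\), restricted to this one-dimensional block of \(H\), is exactly the classical query unitary \(\ooracle_h\) acting on the querying party's workspace alone. Consequently, Alice and Bob never couple through the oracle register, and the only cross-party interaction is via the classical transcript, which is explicitly being conditioned on.

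For the base case, the initial joint state in the purified view is \(\ket{0}_{W_{\alice}}\ket{0}_{W_{\bob}}\ket{\Upphi_0}_H\). Projecting the \(H\)-register onto \(\ket{h}\) and renormalizing yields the manifestly product state \(\ket{0}_{W_{\alice}}\otimes\ket{0}_{W_{\bob}}\) on the two workspaces, with trivial transcript prefix.

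For the inductive step, I would assume that after some prefix of the protocol, the joint state conditioned on the transcript prefix \(t'\) and the oracle \(h\) has the form \(\ket{\alpha}_{W_{\alice}}\otimes\ket{\beta}_{W_{\bob}}\), and consider three types of elementary operations. A local unitary on \(W_{\alice}\) (or \(W_{\bob}\)) trivially preserves the product structure. A purified oracle query by, say, Alice acts on \(W_{\alice}\otimes H\); since the \(H\) component is already fixed to \(\ket{h}\), the query effectively applies \(\ooracle_h\) to \(W_{\alice}\) alone and leaves Bob's register untouched. Finally, sending a classical message is, in the purified view, a delayed measurement of a subregister of \(W_{\alice}\) (or \(W_{\bob}\)) in the computational basis; conditioning on the observed value that appears in \(t\) is a projection on that party's register only, which preserves the tensor factorization up to renormalization.

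The only subtle point, rather than a real obstacle, is justifying the commutation used in the query case, namely that ``apply \(\ooracle\) on \(W_{\alice}\otimes H\), then project \(H\) onto \(\ket{h}\)'' equals ``project \(H\) onto \(\ket{h}\), then apply \(\ooracle_h\) on \(W_{\alice}\)''. This follows because \(\ooracle\) is block diagonal in the computational basis of \(H\), with the block labelled by \(h\) acting as \(\ooracle_h\otimes\kb{h}\). Chaining the three preservation facts over the polynomial-length sequence of operations, and noting that the ordering of the projections onto the transcript and onto \(\ket{h}_H\) does not affect the result, yields the lemma.
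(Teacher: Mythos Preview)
The paper does not prove \cref{lemma:independency}; it is quoted verbatim from \cite[Lemma~3.2]{C:ACCFLM22} and used as a black box. So there is no in-paper proof to compare against.

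Your argument is correct and is in fact the standard proof. The key identity you isolate,
\[
(\ident_{W_{\alice}}\otimes\kb{h}_H)\,\ooracle\,(\ident_{W_{\alice}}\otimes\kb{h}_H)=\ooracle_h\otimes\kb{h}_H,
\]
is exactly the block-diagonality of the purified oracle in the computational basis of \(H\), and it is what makes the conditioning on \(h\) commute with every query step. Once that is in place, the induction over local unitaries, oracle calls, and transcript projections goes through as you describe. One small point worth stating explicitly is that the projection onto \(\ket{h}_H\) commutes with all transcript projections (they act on disjoint registers), so the order in which you ``condition on \(h\)'' versus ``condition on \(t\)'' is immaterial; you allude to this at the end but it is the reason you are allowed to push the \(H\)-projection all the way back to time zero before starting the induction.
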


\section{Attack on the Key Agreement Protocols}
\label{sect:ka}
In this section, we consider key agreement protocols in an extended setting
where both parties are quantum algorithms but they can only send classical
strings over the public authenticated channel to the other party, except that
the last message in the protocol can be a quantum state (in this case, the last
message is not authenticated).
We call this the Classical Communication One Quantum Message (CC1QM) model.
In this extended setting, we show a conditional result based on the polynomial
compatibility conjecture, that any protocol in the CC1QM model with perfect
completeness where Alice does not query the oracle after receiving the last
message can be broken with an expected polynomial number of queries.
We present the formal definition of key agreement protocols in the CC1QM model
in~\cref{sect:CC1QM-KA}.
In~\cref{sect:attack-ka}, we state the main result and its proof.

\subsection{Definitions}
\label{sect:CC1QM-KA}
We start by defining the model of Classical Communication One Quantum Message,
where two quantum parties (Alice and Bob) communicate using the public
authenticated classical channel, except for the last message that can be
quantum.
We assume the first message is from Alice to Bob, while the last message is from
Bob to Alice, and the last quantum message is non-authenticated.
This can be assumed without loss of generality since if the first message is
from Bob to Alice, we can always transform it into the other case, by letting
Alice sends a dummy message to Bob for the first message.
Furthermore, we consider the case where the key that Alice and Bob agree on is
one bit and the protocol succeeds with probability \(1\) (i.e., perfect
correctness).
Also, as for Quantum Key Distribution (QKD), we allow the parties to abort the
protocol at any time, if they detect suspicious activity in the quantum
communication.
Formally, this is done by making Alice output the character \(\bot\) instead of
a key when the protocol is aborted.
More formally, we define:
\begin{definition}[Key agreement protocols in the CC1QM model]
  \label{def:CC1QM-KA}
  We say that \((\alice,\bob)\) is a key agreement protocol between two parties
  Alice and Bob in the CC1QM model (CC1QM-KA) if the following holds:
  \begin{enumerate}
    \item At the beginning of the protocol, Alice and Bob share no common
      information.
      Their corresponding algorithms, \(\alice\) and \(\bob\), are stateful
      oracle-aided quantum algorithms which make at most \(d\) oracle queries.
    \item \textbf{CC1QM.}
      All of the messages are classical messages, except for the last message
      (from Bob to Alice) that can be a (mixed) quantum state, denoted as
      \(\m\).
      The transcript of the protocol is denoted as \(T \coloneqq \left(m_1,\cdots,m_{\ell},\m\right)\).
    \item \textbf{Perfect completeness.}
      At the end of the protocol, Alice and Bob agree on a key
      \(\key \in \{0,1\}\) with probability \(1\) when the protocol succeeds
      (i.e.~when neither Alice or Bob outputs \(k=\bot\)).
    \item \textbf{Security.}
      Let \(\Aop'\) be Alice's last computation in the protocol after she
      receives the final message from Bob.
      By deferred measurement principle, we can modify \(\Aop'\) so that it
      applies a unitary transformation \(\Aop\) followed by a measurement in the
      computational basis \(\{\Pik\}_{\key \in \bin}\) and outputting a key
      \(\key\), and we write \(\Aop' \coloneqq \Pik \Aop\).
      Similarly, let \(\Bop' \coloneqq \Pik \Bop\) be Bob's last computation in
      the protocol after he sends the final message to Alice.
      We note that \(\Aop\) and \(\Bop\) can make quantum queries to the oracle,
      and the output of \(\Aop\) (resp. \(\Bop'\)) is the output key of Alice
      (resp. Bob) at the end of the
      protocol execution.
      Let \((T, \realA, \realB) \gets \left\langle \alice \models \bob \right\rangle\) be the output of an
      execution of the protocol right before Alice receives the last quantum
      message from Bob, where \(T \coloneqq \left(m_1,\cdots,m_{\ell},\m\right)\) is the transcript of
      the execution, \(\realA\) and \(\realB\) are the internal state of
      \(\alice\) and \(\bob\), respectively.
      \((\alice, \bob)\) is secure if for any polynomially-bounded query
      adversary \(\edv\):
      \begin{align*}
        \pr\left[
        \begin{array}{c}
          \key = \key_{\alice} = \key_{\bob} \\
          \key_{\alice} \neq \bot \\
          \key_{\bob} \neq \bot
        \end{array}
        \ \middle\vert
        \begin{array}{r}
          (T, \realA, \realB) \gets  \left\langle \alice \models \bob \right\rangle \\
          (\key, \m') \gets \edv(1^{\secpar}, T) \\
          \key_{\alice} \gets \Aop'(\realA, \m') \\
          \key_{\bob} \gets \Bop'(\realB)
        \end{array}
        \right]
        \leq \negl.
      \end{align*}
      We say that a CC1QM-KA protocol \((\alice, \bob)\) is \((\epsilon,s)\)-broken if
      there exists an attacker Eve that finds the key of \((\alice, \bob)\) with
      probability at least \(\epsilon\), \((\alice, \bob)\) succeeds with probability
      at least \(\poly[\epsilon]\), and Eve makes an expected number of queries at most
      \(s\).
  \end{enumerate}
\end{definition}

\subsection{The Attack on Key Agreements Protocols}
\label{sect:attack-ka}
The goal of the section is to prove the following theorem that states that are
no CC1QM-KA protocol in the QROM.
\begin{theorem}
  \label{theorem:mainka}
  Let \((\alice,\bob)\) be a CC1QM-KA protocol, where Alice and Bob make at most
  \(d\) queries to a random oracle \(h: \mathcal{X} \rightarrow \mathcal{Y}\),
  and Alice does not query the oracle in the last part of the protocol (after
  receiving the quantum message from Bob).
  Assuming \Cref{conjecture:pcc} is true, then there exists an attacker Eve that
  makes at most \(\poly[d, |\mathcal{Y}|]\) many \emph{classical} queries to \(h\) and
  breaks the security (according to~\cref{def:CC1QM-KA}) with probability at
  least \(0.8\).
\end{theorem}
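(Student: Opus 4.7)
The plan is to apply the attack of \cite{C:ACCFLM22} extended with one extra step to handle the quantum final message. I would run \cref{construction:simulation} with parameter $\epshl$ chosen so that Eve learns all $\epshl$-heavy queries classically, using $O(d/\epshl) = \poly[d]$ queries. After the transcript $(m_1,\dots,m_\ell)$ has been produced but \emph{before} Bob's quantum message $\m$ is sent, Eve holds a simulated joint state $\ket{\simulatedA}_{W'_A}\otimes\ket{\simulatedB}_{W'_B}\otimes\ket{h}_{H}$, conditioned on the transcript and on the heavy queries she has learned. By~\cref{lemma:independency} this factorizes over Alice and Bob given the classical transcript and the learned oracle values, and by~\cref{lemma:sparse,lemma:sparse2} combined with the definition of heavy queries, the marginals on $W'_A H$ and $W'_B H$ are $\goodstate[\yspace, \epshl, d, N]$-states on the unlearned coordinates.

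The core step is to establish \cref{eq:findthekey}: applying Alice's unitary $\Aop$ on the triple $(\ket{\simulatedA}_{W'_A}, \ket{h}_H, \ket{\m}_M)$ and measuring $\Pik$ yields Bob's key with probability $\geq 1-\keerr$. Here I exploit the hypothesis that Alice makes no queries after receiving $\m$: since $\Aop$ acts trivially on $H$, the post-$\Aop$ state on $W'_A H$ remains $\goodstate$. Suppose for contradiction that with noticeable probability $\keerr$ the measurement outputs $\key_E \neq \key_B$. Consider the $\goodstate$-state obtained by projecting on the ``wrong'' key outcome on Eve's side, and the $\goodstate$-state on Bob's side corresponding to the real execution. \Cref{conjecture:pcc} provides a single oracle $h^*$ in the intersection of their computational-basis supports; this $h^*$ witnesses a complete execution consistent with the transcript, the quantum message $\m$, and both key values simultaneously, contradicting perfect completeness.

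With \cref{eq:findthekey} in hand, I would prove \cref{eq:closemessage} by a gentle measurement / Uhlmann-style argument: define $\simulatedm$ as the result of starting from the post-measurement state of $\Pik\Aop(\ket{\simulatedA}\otimes\ket{h}\otimes\ket{\m})$, applying $\Aop^\dagger$, and tracing out $W'_A H$. Because the projection $\Pik$ succeeds with probability $\geq 1-\keerr$, the state after $\Aop^\dagger\Pik\Aop$ has trace-distance $O(\sqrt{\keerr})$ from the original, and tracing out the registers Alice would not touch gives $\simulatedm$ close to $\ket{\m}$ in fidelity, yielding \cref{eq:closemessage}. Then \cref{eq:agrees} follows by combining \cref{eq:closemessage} with the analogous statement on Bob's side (Bob agrees on $\key$ from his real state $\realB$ after sending $\ket{\m}$, by perfect completeness) and monotonicity of trace distance under the quantum channel $\Aop^\dagger \Pik \Aop$ applied to $\realA \otimes \ket{h} \otimes \cdot$.

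Putting it all together, Eve (i) outputs the key obtained from \cref{eq:findthekey}, and (ii) sends $\simulatedm$ to Alice; by \cref{eq:agrees} Alice accepts (does not abort) and outputs the same key, and Bob also outputs the same key by perfect completeness, so all three keys coincide with probability $\geq 1 - O(\sqrt{\keerr})$. Choosing $\epshl = 1/\poly[d]$ small enough to invoke \cref{conjecture:pcc} and a Markov bound on the number of learned queries gives the $0.8$ success probability with $\poly[d,\size{\yspace}]$ classical queries. The main obstacle is the contradiction argument for \cref{eq:findthekey}: one must carefully check that after conditioning on Eve's heavy queries, on the transcript, and on the ``wrong-key'' projection, both Alice's and Bob's reduced states remain $d$-sparse and $\epshl$-light so that PCC applies — this is where the hypothesis that $\Aop$ does not query $\ooracle$ is essential, because otherwise the post-$\Aop$ state could develop new heavy queries and violate lightness.
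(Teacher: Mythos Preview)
Your outline follows the paper's strategy closely: run the heavy-query learner, establish \cref{eq:findthekey} via a PCC-based contradiction, then \cref{eq:closemessage} and \cref{eq:agrees} to force Alice onto the same key. Two points deserve attention.

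\textbf{The group-equivalence step is missing.} \Cref{conjecture:pcc} is an \emph{existential} statement: it only promises some Abelian group $\yspace_0$ and some $\delta=1/\poly[d]$ for which compatibility holds. The protocol's oracle range $\yspace$ need not be that group, so you cannot simply ``choose $\epshl=1/\poly[d]$ small enough to invoke \cref{conjecture:pcc}'' directly. The paper handles this with \cref{lemma:groupequivalence} (following \cite[Lemma~4.8]{C:ACCFLM22}): an attack for range $\yspace_0$ transfers to any range $\yspace'$ at the cost of a factor depending on $\size{\yspace'}$, and this is exactly where the $\size{\yspace}$ in the query bound $\poly[d,\size{\yspace}]$ comes from. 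Without this lemma your argument only covers protocols whose oracle range happens to be $\yspace_0$.

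\textbf{The two states fed to PCC.} You describe comparing ``the $\goodstate$-state obtained by projecting on the wrong key on Eve's side'' with ``the $\goodstate$-state on Bob's side.'' As written this is ambiguous: Bob's reduced state is mixed, and the two objects appear to live over different oracle registers (Eve's simulated $H'$ versus the real $\widetilde H$). The paper instead forms the single pure joint state $\jointstate=\evestate\otimes\ket{\phi_{t,L}}_{W\widetilde H}$, applies $\Aop$ on $W'_A M$, and then projects on $\key'=0$ and $\key'=1$; both resulting states are pure $\goodstate$-states over the \emph{same} register $\widetilde H$, so PCC applies directly and the common oracle $h^*$ simultaneously supports a real Bob outputting $\kB$ and a (simulated, hence valid) Alice outputting $1-\kB$. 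Your intended argument is the same, but the phrasing should make clear that both states carry the real oracle register and that the $\goodstate$ property is checked there. The observation you highlight---that $\Aop$ touching only $W'_A M$ preserves sparsity and lightness on $\widetilde H$---is exactly the crux, and matches the paper.

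The gentle-measurement route to \cref{eq:closemessage} and the trace-distance monotonicity route to \cref{eq:agrees} are fine alternatives to the paper's direct computations; you lose a square root in the error but this is absorbed in the final parameter choice.
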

The proof consists of two parts, the first one shows that Eve manages to find the same key as the one computed as Bob, and this is proven in \Cref{sec:findthekey}.
The second part consists of showing that Alice agrees on the same key as Eve and Bob and this corresponds to \Cref{section:breakingccqc}.
First, in the next section, we prove that the attack does not depend on the group of the domain of the function.

\subsubsection{Group Equivalence of the Attack}
We first show that if there is an attack for an Abelian group \(\mathcal{Y}\), then there
is an attack for any other Abelian group \(\mathcal{Y}'\), up to some error terms.
This allows us to relax the conjecture to be true for \textit{any} Abelian
group, as in \cite{C:ACCFLM22}.
The proof follows closely \cite{C:ACCFLM22}'s proof as they are almost
identical, and we include it here for completeness.
\begin{lemma}%
  \label{lemma:groupequivalence}
  Suppose there exists a finite Abelian group \(\mathcal{Y}\), a constant \(\tau > 0\) and a
  function \(s(\cdot)\) such that for all \(d \in \mathbb{N}\) and any CC1QM-KA protocol \(\left(\alice_1^{h},\bob_1^{h}\right)\) where Alice and Bob asks
  \(d\) queries to a random oracle \(h\) whose range is \(\mathcal{Y}\), and
  Alice does not query the oracle after receiving the last message, it holds that
  \(\left(\alice_1^{h},\bob_1^{h}\right)\) is \((\tau,s(d))\)-broken.
  Then, for any finite Abelian group \(\mathcal{Y}'\), any \(d' \in \mathbb{N}\), \(\delta > 0\) and any
  CC1QM-KA protocol \(\left(\alice_2^{h'},\bob_2^{h'}\right)\)
  where Alice and Bob asks \(d'\) queries to another random oracle \(h'\) whose
  range is \(\mathcal{Y}'\), \(\left(\alice_2^{h'},\bob_2^{h'}\right)\), and
  Alice does not query the oracle after receiving the last message, can be
  \((\tau - \delta, 4s(md'))\)-broken, where
  \begin{equation*}
    m = \ceil{\log_{\size{\mathcal{Y}}}(d'^{3}\size{\mathcal{Y'}}/4\delta^2)}.
  \end{equation*}
\end{lemma}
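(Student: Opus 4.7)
The plan is to reduce the general case to the hypothesis via the standard domain-extension trick: simulate a \(\yspace'\)-valued oracle using \(m\) parallel calls to a \(\yspace\)-valued oracle, following the strategy of~\cite{C:ACCFLM22}. Given a CC1QM-KA protocol \((\alice_2, \bob_2)\) with oracle \(h' : \xspace \to \yspace'\), I would construct an auxiliary protocol \((\alice_1, \bob_1)\) with oracle \(h : \xspace \times [m] \to \yspace\). Fix a surjection \(\phi : \yspace^{m} \to \yspace'\) whose image under the uniform distribution is within statistical distance \(\eta \leq \size{\yspace'}/\size{\yspace}^{m}\) of the uniform distribution on \(\yspace'\). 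Parties \(\alice_1, \bob_1\) internally run \(\alice_2, \bob_2\), answering each query \(h'(x)\) by coherently computing \(\phi(h(x,1), \ldots, h(x,m))\) into a fresh register and uncomputing intermediate values; each \(h'\)-query is thus replaced by \(m\) queries to \(h\), so each party asks at most \(md'\) queries. Crucially, since \(\alice_2\) makes no query after receiving the final quantum message, neither does \(\alice_1\), so the precondition of the hypothesis applies. The choice \(m = \lceil \log_{\size{\yspace}}(d'^{3}\size{\yspace'}/4\delta^{2}) \rceil\) yields \(\eta \leq 4\delta^{2}/d'^{3}\), and a hybrid argument over the \(d'\) oracle queries bounds the trace distance between the executions of \((\alice_1, \bob_1)\) and \((\alice_2, \bob_2)\) by \(d' \cdot \eta \leq \delta\).

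By the hypothesis, there exists an adversary \(\edv_1\) that \((\tau, s(md'))\)-breaks \((\alice_1, \bob_1)\). I would convert \(\edv_1\) into an adversary \(\edv_2\) against \((\alice_2, \bob_2)\) via lazy sampling: on the real transcript \(T\) (which has the same format as a transcript of \((\alice_1, \bob_1)\) since the messages are unchanged), \(\edv_2\) runs \(\edv_1(T)\) and maintains a cache indexed by \(x\). When \(\edv_1\) queries \((x, i)\) with \(x\) not yet cached, \(\edv_2\) queries \(h'(x)\) once, samples \((y_1, \ldots, y_m)\) uniformly from \(\phi^{-1}(h'(x))\), stores the tuple, and returns \(y_i\); subsequent queries reuse the stored tuple. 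This exactly reproduces the distribution of \(h\) conditioned on \(\phi(h(x, \cdot)) = h'(x)\), so \(\edv_1\)'s view coincides with its view in a real \((\alice_1, \bob_1)\) execution. Since \(\edv_2\) makes at most one \(h'\)-query per distinct \(x\) queried by \(\edv_1\), the expected number of \(h'\)-queries is at most \(s(md')\).

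Combining the two steps, the success probability of \(\edv_2\) against \((\alice_2, \bob_2)\) is at least \(\tau - \delta\), and perfect completeness of \((\alice_2, \bob_2)\) ensures the protocol succeeds with probability \(1 \geq \poly[\tau - \delta]\), as required by the broken-protocol definition. The extra factor of \(4\) in the query bound \(4 s(md')\) absorbs the overhead of turning an expected-query guarantee into a strict guarantee via a Markov-style truncation: aborting \(\edv_2\) after \(4 s(md')\) queries forfeits at most \(1/4\) of the success probability.

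The main obstacle is verifying that this classical-style reduction lifts coherently to the quantum setting. Since \(\alice_2\) and \(\bob_2\) may query \(h'\) in superposition, the simulated query must be implementable as a reversible unitary; this works because \(\phi\) is a classical function and lifts to a unitary, with work registers uncomputed after each use. A secondary subtlety is the hybrid analysis bounding the trace distance per simulated query: for each \(x\), the registers storing \(h(x, \cdot)\) are initially uniform and unentangled from the rest of the state, so replacing \(h'(x)\) by \(\phi(h(x, \cdot))\) introduces per-query trace distance only \(\eta\), and the standard quantum hybrid gives the overall \(\delta\) bound. The structural features of the CC1QM-KA model---perfect completeness, last-message-quantum, and no Alice queries afterwards---are inherited syntactically from \((\alice_2, \bob_2)\) by the simulation.
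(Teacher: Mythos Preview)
Your proposal is correct and follows essentially the same route as the paper, which simply defers to Lemma~4.8 of~\cite{C:ACCFLM22} and notes that the oracle-simulation argument carries over verbatim to the CC1QM setting because the ``no-query-after-last-message'' constraint is preserved by the construction. Two minor remarks: the paper's definition of \((\epsilon,s)\)-broken already uses an \emph{expected} query bound, so the factor \(4\) in \(4s(md')\) is inherited from the quantitative analysis in~\cite{C:ACCFLM22} rather than from a Markov truncation; and the paper singles out as the one new point that Eve must also make Alice and Bob agree on her key, which in your reduction is handled because \(\edv_2\) forwards \(\edv_1\)'s quantum message \(\m'\) unchanged and the \(\delta\) trace-distance bound applies to the full success event.
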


\begin{proof}
  The proof follows from the proof of Lemma 4.8 from \cite{C:ACCFLM22}.
  The only difference is that we must also show that with probability at
  least \(\tau - \delta\), Alice and Bob agree on the same key as Eve.
  However, their proof relies on the fact that we can simulate a random oracle
  with another random oracle, even when their ranges are different, up to some
  errors.
  Thus, their proof follows through in our setting as well, and with the same
  parameters.
\end{proof}

\begin{lemma}[Attacking CC1QM-KA protocols]
  \label{lemma:breakingccqc}
  Assume \Cref{conjecture:pcc} is true for some Abelian group \(\mathcal{Y}\) and
  parameters \(d\) and \(\delta = \keerr/\keeps\).
  Let \((\alice, \bob)\) be a CC1QM-KA protocol where Alice and Bob make at most
  \(d\) queries to a random oracle \(h:\mathcal{X} \rightarrow \mathcal{Y}\),
  and Alice does not query the oracle after receiving the last message.
  Then, there exists an active attacker Eve who finds the secret key \(k\) with
  probability \(1-\keerr\) according to~\cref{def:CC1QM-KA}.
  Moreover, Eve is expected to make at most \(d/\keeps\) queries to \(h\).
\end{lemma}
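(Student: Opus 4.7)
The plan is to take Eve to be the quantum-heavy queries learner of \Cref{construction:simulation} instantiated with threshold $\keeps$, followed by two post-transcript actions: (i) guess the key by applying Alice's final unitary $\Aop$ to $\ket{\simulatedA}_{W'_A} \otimes \ket{\h}_H \otimes \m$ in register $M$, and measuring with $\{\Pik\}_{\key}$, where $\ket{\h}$ denotes the purified oracle register consistent with Eve's learned list $\qalist{L}$; and (ii) forward a forged quantum message $\simulatedm$ to Alice, obtained by running $\Aop$, projecting onto the measured key outcome, inverting $\Aop$, and tracing out the simulated-Alice and oracle registers. The expected query bound $d/\keeps$ is immediate from the efficiency clause of \Cref{lemma:eve}, and the correctness of the attack reduces to establishing the three inequalities \Cref{eq:findthekey}, \Cref{eq:closemessage}, and \Cref{eq:agrees} laid out in the technical overview.

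The central step is \Cref{eq:findthekey}. First I would observe that, conditioned on the classical transcript $t$ and on the query-answer list $\qalist{L}$, both the simulated state $\ket{\simulatedA}\otimes\ket{\h}$ produced by \Cref{construction:simulation} and the real joint state of Alice, Bob, and the oracle are $\goodstate$-states in the sense of \Cref{def:goodstates}: $d$-sparsity of the $H$ register follows from \Cref{lemma:sparse} combined with \Cref{lemma:sparse2} applied to the partial oracle encoded by $\qalist{L}$, while $\keeps$-lightness is precisely the security clause of \Cref{lemma:eve}. Crucially, because $\Aop$ makes no queries to $H$, it acts as the identity on the $H$ register and therefore preserves both the sparsity and the lightness parameters. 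If \Cref{eq:findthekey} failed, the post-measurement component projected onto the \emph{wrong} key would still be a $\goodstate$-state of noticeable weight; applying \Cref{conjecture:pcc} to this component and to the honest state on Bob's side after his key measurement, one extracts a common classical oracle $h^{\star}$ in the intersection of their computational-basis supports. Invoking \Cref{lemma:independency} to decouple the real state into Alice and Bob components given $(t, h^{\star})$ then yields a transcript-consistent execution in which Alice and Bob compute different keys under the \emph{same} oracle, contradicting perfect completeness.

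With \Cref{eq:findthekey} in hand, \Cref{eq:closemessage} follows from a gentle-measurement style argument: since the projector $\Pik\Aop$ accepts with probability at least $1-\keerr$ on the input $\kb{\simulatedA}\otimes\kb{\h}\otimes\m$, the trace distance between the normalized post-measurement-and-inversion state and the original input is of order $\sqrt{\keerr}$, and tracing out the simulated-Alice and $H$ registers preserves this closeness, yielding $\simulatedm$ close to $\m$. To conclude \Cref{eq:agrees}, I would combine \Cref{eq:closemessage} with perfect completeness, which guarantees that $\Aop$ acting on the \emph{real} registers $\kb{\realA}\otimes\kb{\h}\otimes\m$ produces the same key as Bob with probability $1$; monotonicity of trace distance under $\Aop$ followed by the projective measurement then transfers this guarantee to the input tensored with $\simulatedm$, so Alice also outputs the key $\key$ agreed upon by Eve and Bob.

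The main obstacle is the application of \Cref{conjecture:pcc} inside the proof of \Cref{eq:findthekey}: one must argue that the post-measurement state on the wrong-key branch retains both $d$-sparsity and $\keeps$-lightness with parameters matching those assumed by the PCC, and the whole argument collapses if $\Aop$ could modify the $H$ register. This is precisely why the hypothesis forbids oracle queries by Alice after Bob's quantum message---and it is also the technical reason highlighted in the overview for why relaxing that hypothesis requires a genuinely new idea rather than a routine extension of the argument of~\cite{C:ACCFLM22}.
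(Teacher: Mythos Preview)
Your proposal is correct and follows the paper's architecture: run the heavy-query learner of \Cref{construction:simulation} with threshold $\keeps$, establish \Cref{eq:findthekey} via \Cref{conjecture:pcc} plus a perfect-completeness contradiction (using that $\Aop$ leaves the oracle register untouched), then derive \Cref{eq:closemessage} and \Cref{eq:agrees} to handle the forged message. The only deviations are cosmetic: the paper applies the PCC to the \emph{two key branches} $\jjointstate[,\key'=0]$ and $\jjointstate[,\key'=1]$ of a single joint state (Eve's simulation tensored with the real purified state) rather than to Eve's wrong-key branch against Bob's honest state as you phrase it, and it proves \Cref{eq:closemessage} and \Cref{eq:agrees} by direct trace computations that give the tighter bound $1-\keerr$ rather than the $1-O(\sqrt{\keerr})$ your gentle-measurement/monotonicity route would yield---either suffices here.
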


The proof of \Cref{lemma:breakingccqc} is given in
subsequent~\Cref{section:breakingccqc,sec:findthekey}.

We can now prove \Cref{theorem:mainka}:
\begin{proof}[Proof of \Cref{theorem:mainka}]
  The proof follows immediately from \Cref{lemma:breakingccqc},
  \Cref{lemma:groupequivalence} and the proof of \cite[Theorem 4.5]{C:ACCFLM22}.
\end{proof}

\subsubsection{Part 1: Finding Bob's Key}
\label{sec:findthekey}
In this subsection, we show that the attack algorithm described
in~\Cref{construction:simulation} can efficiently find Bob's key with high
probability, assuming that \Cref{conjecture:pcc} is true.
We first state and show a useful lemma that allows us to assume that when Bob
sends the last message, he has already computed the key \(k\) on his side.

\begin{lemma}
  \label{lemma:bobmeasures}
  Let \((\alice,\bob)\) be a CC1QM-KA protocol.
  Let \(\realB \) be the internal state of \(\bob\) after he computed the
  message \(\m\).
  Then, we can assume w.l.o.g.~that Bob has computed the key \(\kB\) from
  \(\realB\) before he sends the last message \(\m\) to Alice.
\end{lemma}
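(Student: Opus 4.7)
The plan is to invoke a form of the advanced-measurement principle, based on the fact that operations acting on disjoint registers commute. By~\cref{def:CC1QM-KA}, after Bob prepares the quantum message $\m$ in the message register $M$ and sends it to Alice, his final computation consists of the unitary $\Bop$ followed by the projective measurement $\{\Pik\}$, both acting only on his internal register $W_B$. Since $M$ is disjoint from $W_B$, the application of $\Bop$ and the key measurement commute with the send-step, which is simply the transfer of $M$ from Bob to Alice.

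First I would formalize the two orderings of Bob's actions. In the original description, Bob (i) produces the joint state $\rho_{W_B M}$ by executing his protocol operations up to and including the preparation of $\m$, (ii) sends $M$ to Alice, and (iii) applies $\Bop$ on $W_B$ and measures $\{\Pik\}$ to obtain $\kB$. In the alternative description, Bob (i) produces $\rho_{W_B M}$, (ii) applies $\Bop$ on $W_B$ and measures $\{\Pik\}$ to obtain $\kB$, and (iii) sends $M$ to Alice. I would then verify that these two descriptions yield the same joint distribution on the transcript (including $\m$), Alice's internal state upon receiving $M$, and Bob's key $\kB$. Formally, this reduces to the identity $\Tr_M((\Bop \otimes I)\rho(\Bop^\dagger \otimes I)) = \Bop\, \Tr_M(\rho)\, \Bop^\dagger$ and the analogous statement for the projective measurement, both of which hold because taking the partial trace over $M$ commutes with any superoperator acting solely on $W_B$.

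The main obstacle, which is really a matter of bookkeeping, is to confirm that $\Bop$ indeed acts only on $W_B$ and not on the message register $M$. This is ensured by the definition of $\Bop$ in~\cref{def:CC1QM-KA} as Bob's last computation performed \emph{after} sending the final message: at that point $M$ has already left Bob's possession, so no operation of his can touch it. Given this, the lemma follows without any change to the distribution of the transcript, Alice's view, or Bob's key, and we may henceforth assume that Bob has measured $\kB$ (and recorded it classically in $W_B$) prior to sending $\m$ to Alice.
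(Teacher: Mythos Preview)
Your proposal is correct and takes essentially the same approach as the paper: both arguments rest on the fact that Bob's final computation $\Bop$ and key measurement act on registers disjoint from the message register $M$ (and from Alice's registers), and therefore commute with everything that happens after the send. The paper compresses this into a one-line appeal to the no-signaling principle, whereas you spell out the partial-trace identity and the two equivalent orderings explicitly; the content is the same.
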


\begin{proof}

  Since the last message of the protocol is sent to Alice by
  Bob, by the no-signaling principle, Alice's computation after receiving \(\m\)
  must commute with Bob's computation after sending \(\m\).
  Thus, Bob can compute the key on his side before sending the last message
  \(\m\).

\end{proof}

\begin{lemma}[Simulation]
  \label{lemma:simulation}
  Let \((\alice,\bob)\) be a CC1QM-KA protocol where Alice and Bob make at most
  \(d\) queries to an oracle \(h:\mathcal{X} \rightarrow \mathcal{Y}\), and
  Alice does not query the oracle after receiving the last message.
  Assuming~\cref{conjecture:pcc} is true, for \(0 < \keerr < 1\), there exists
  an active attacker Eve that finds Bob's key \(\kB\) with probability at least
  \(1 - \keerr\) and Eve is expected to make at
  most \(\poly[d, \frac{1}{\keerr}]\) queries to \(h\).
\end{lemma}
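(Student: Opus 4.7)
The plan is to instantiate Eve as the quantum-heavy queries learner of \cref{construction:simulation} with parameter $\epshl = \keerr/d$ (or any $\poly[d,1/\keerr]$ value that makes the final bound go through), and then to let Eve locally emulate Alice's final, oracle-free computation $\Aop$ on her simulated register $W'_\alice$ together with the oracle register $H$ and the \emph{true} quantum message $\m$ received from Bob. Concretely, Eve (i) simulates the classical phase of the protocol using \cref{construction:simulation}, obtaining a simulated joint state $\ket{\simulatedA}_{W'_\alice}\ket{\h}_H$ consistent with the transcript $t$ and her learned query-list $\qalist{L}$; (ii) invokes \cref{lemma:bobmeasures} to assume w.l.o.g.~that Bob has already measured his key $\kB$ before sending $\m$; (iii) applies $\Aop$ to $\kb{\simulatedA}_{W'_\alice}\otimes\kb{\h}_H\otimes\m$ and measures $\Pik$ to obtain a guess $\key_{\alice'}$.

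Efficiency is immediate from \cref{lemma:eve}: the expected size of $\qalist{L}$ is $d/\epshl = \poly[d,1/\keerr]$, and each query in $\qalist{L}$ is a single classical query to $h$. The rest of Eve's computation is classical description manipulation and a local unitary simulation, hence does not contribute additional oracle queries. So the only substance to verify is \cref{eq:findthekey}, i.e.\ that $\key_{\alice'} = \kB$ with probability at least $1-\keerr$.

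The main obstacle, and the heart of the argument, is establishing \cref{eq:findthekey} using the Polynomial Compatibility Conjecture. My plan here is to argue by contradiction: suppose that for some transcript $t$ (occurring with non-negligible probability) Eve's output disagrees with Bob's key $\kB$ with probability exceeding $\keerr$. I will then extract two quantum states to which \cref{conjecture:pcc} applies. The first is Eve's simulated Alice-and-oracle state conditioned on the transcript and on the measurement $\Pik[\key_{\alice'}]$ returning a key $\key_{\alice'}\neq \kB$; by \cref{lemma:sparse,lemma:sparse2} this state is $d$-sparse on $H$, and by the stopping rule of \cref{construction:simulation} combined with \cref{lemma:eve} (security) it is $\epshl$-light on every unqueried coordinate, so it is a $\goodstate$-state in the sense of \cref{def:goodstates}. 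The second is Bob's real post-send state tensored with the oracle register purified view, conditioned on $t$, on the learned queries $\qalist{L}$, and on Bob having output $\kB$; by \cref{lemma:independency} this factorizes cleanly from Alice's side, and the same sparsity/lightness arguments apply. The crucial use of the hypothesis that Alice makes no oracle queries in $\Aop$ is that the register $H$ is untouched by $\Aop$, so the sparsity and lightness bounds on the simulated Alice state are preserved through the final unitary and through conditioning on the key outcome.

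By \cref{conjecture:pcc} these two $\goodstate$-states are compatible, so there exists an oracle $\h^\ast$ lying in both computational-basis supports. Running the protocol with oracle $\h^\ast$ and the transcript $t$ then produces an execution in which Bob outputs $\kB$ (from the second state) while Alice, on Bob's actual message $\m$ under $\h^\ast$, outputs the disagreeing key $\key_{\alice'}$ (from the first state, using that $\Aop$ is oracle-free so it commutes with conditioning on $\h^\ast$). This contradicts the perfect completeness of $(\alice,\bob)$, yielding the desired bound $\Pr[\key_{\alice'}=\kB]\geq 1-\keerr$. The delicate point to check carefully in the full write-up is that the conditioning on the transcript, on $\qalist{L}$, and on the measurement outcome $\key_{\alice'}\neq\kB$ all preserve $d$-sparsity and $\epshl$-lightness simultaneously on both states; this is where \cref{lemma:sparse2} and the security clause of \cref{lemma:eve} do the real work.
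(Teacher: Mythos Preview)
Your plan is close to the paper's in its overall architecture---run the heavy-query learner, apply $\Aop$ to Eve's simulated Alice register together with the intercepted message, and derive a contradiction with perfect correctness via the PCC---but your instantiation of the two states fed to \cref{conjecture:pcc} differs from the paper's in a way that creates a gap.

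The paper does \emph{not} compare ``Eve's simulated Alice conditioned on the wrong key'' against ``real Bob conditioned on the right key.'' Instead it forms a single joint state $\jjointstate$ over \emph{all} registers (Eve's simulation $W'_{\alice}W'_{\bob}H'$, the real protocol $W_{\alice}W_{\bob}\widetilde{H}$, and the message $M$ holding $\ket{\mi}$), applies $\Aop$ on $W'_{\alice}M$, and then conditions on Eve's key being $0$ versus $1$. Both conditioned states $\jjointstate[,\key'=0]$ and $\jjointstate[,\key'=1]$ are therefore post-measurement states of the \emph{same} object, and the compatible oracle $\hat h$ produced by the PCC is automatically in the support of one state that simultaneously carries real Bob, the fixed key $\kB$, and the specific message $\ket{\mi}$. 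This is what lets the paper conclude that for $\hat h$ there is a genuine execution in which Bob outputs $\kB$ while Alice (equivalently, Eve's simulated Alice) on $\ket{\mi}$ outputs $1-\kB$.

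In your setup, by contrast, state~2 is Bob's post-send state and carries no information about the message register. When the PCC hands you a compatible $h^\ast$, you know that under $h^\ast$ Bob outputs $\kB$, and that Eve's simulated Alice on $\ket{\mi}$ outputs the wrong key---but you do not know that under $h^\ast$ Bob would actually \emph{send} $\ket{\mi}$. Since perfect correctness only constrains Alice's output on the message Bob sends under that oracle, the contradiction does not close. You would need to either fold the message (or Bob's pre-send state) into state~2, or switch to the paper's symmetric ``same state, two key outcomes'' formulation.

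One further correction: you write that lightness is ``preserved through \ldots\ conditioning on the key outcome.'' It is not. Conditioning on an event of probability $\geq \keerr$ degrades $\epshl$-lightness to $\epshl/\keerr$-lightness, exactly as the paper computes; this is why the learner's parameter must be chosen as $\epshl \leq \keerr \cdot \delta(d)$ (with $\delta$ the PCC parameter), not merely $\keerr/d$. Sparsity is indeed preserved, and the unitary $\Aop$ (being oracle-free) preserves both, but the key-outcome projection is where the loss happens and your parameter choice must absorb it.
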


\begin{proof}[Proof of~\cref{lemma:simulation}]
  Let Bob's last message be \(\m_{M} = \sum_i\qi\kb{\mi}_{M}\), and let
  \(\Aop' \coloneqq \Pik \Aop\) be Alice's computation in the last step of the
  protocol.
  Let \(\kB\) be the key computed by Bob at the end of the protocol.
  By~\cref{lemma:bobmeasures}, we can assume that Bob already computes his key
  \(\kB\) before sending the last message to Alice.

  Our attacking algorithm \(\eve_{1}\) is described below.
  \begin{construction}
    \(\eve_{1}\) runs the quantum-heavy queries learner \(\eve\)
    in~\cref{construction:simulation} with parameter
    \(\keeps \coloneqq \frac{1}{\poly[d,\frac{1}{\keerr}]}\) 
    conditioned on the classical transcript \(t\) until before Bob sends his
    last message, except that it aborts if \(\eve\) asks more than
    \(\frac{d}{\keeps}\) queries.
    In the case \(\eve_{1}\) does not abort, let
    \(\evestate_{W'_{\alice}W'_{\bob}H'}\) be the state that \(\eve\) outputs,
    conditioned on the classical transcript \(t\).
    \(\eve_{1}\) then outputs the measurement outcome of
    \begin{equation*}
      \Aop' \left(\evestate_{W'_{\alice}W'_{\bob}H'} \otimes \m_{M}\right),
    \end{equation*}
    where \(\Aop'\) makes no oracle query to \(h\) and acts on two registers
    \(W'_{\alice}\) and \(M\) only.
  \end{construction}

  By~\cref{lemma:eve}, the number of queries asked by \(\eve\) satisfies
  \(\EE{\size{\qalist{L}}} \leq \frac{d}{\keeps}\).
  By Markov's inequality, we have
  \begin{equation*}
    \prob{\size{\qalist{L}} \geq \frac{d}{\keerr \cdot \keeps}} \leq \keerr.
  \end{equation*}
  Thus, we can conclude that with probability at least \(1 - \keerr\), all of
  the following events hold:
  \begin{itemize}
    \item \(\eve_{1}\) is efficient: \(\eve_{1}\) does not abort and asks at
      most \(\frac{d}{\keerr \cdot \keeps} = \poly[d, 1 / \keerr]\) queries.
    \item Up until before Bob sends his last message, no quantum
      \(\keeps\)-heavy query is left: for all \(x \notin \querylist{L}, w(x) < \keeps\),
      where \(w(\cdot)\) is defined in~\cref{def:heavyquery}.
  \end{itemize}
  Suppose that all the above events occur for the rest of the
  proof \hypertarget{misc:assumption}{\((^{\mathbf{\star}})\)}.
  For simplicity, denote \(\evestate_{W'_{\alice}W'_{\bob}H'}\) as
  \(\evestate_{W'_{\alice}E}\).

  We will consider the purified version of the protocol.
  Let \(\ket{\phi_{t}}_{WH}\) be the joint state of the real protocol before Bob
  sends his last message to Alice, conditioned on the classical transcript
  \(t\).
  After \(\eve_{1}\) learns the heavy queries, the resulting state becomes
  \(\ket{\phi_{t, L}}\) conditioned on \(t\) and Eve's list of query-answer \(L\).
  Since the oracle registers corresponding to \(Q_{L}\) are now measured, we can
  consider the ``truncated'' version of \(\ket{\phi_{t, L}}_{WH}\) by discarding
  those registers.
  Let \(\widetilde{H} \coloneqq \{H_{x}\}_{x \in \xspace \setminus Q_{L}}\) be the set of
  remaining registers.
  By \(\ket{\phi_{t, L}}_{W\widetilde{H}}\) we denote the truncated
  \(\ket{\phi_{t, L}}_{WH}\).

  Let
  \(\jointstate_{W'_{\alice}EW\widetilde{H}} \coloneqq \evestate_{W'_{\alice}E}\ket{\phi_{t, L}}_{W\widetilde{H}}\)
  be the joint state of \(\eve_{1}\) and the real protocol right before Bob
  sends his last message to Alice.
  By~\cref{lemma:sparse}, it holds that
  \(\size{\hmax{\jointstate}} \leq \poly[d, 1 / \keerr]\), and
  by~\cref{lemma:sparse2}, it holds that
  \(\size{\hmax[\widetilde{H}]{\jointstate}} \leq \poly[d, 1 / \keerr]\).

  By the assumption \hyperlink{misc:assumption}{\((^{\mathbf{\star}})\)} above, we
  have that \(\jointstate\) is a
  \(\goodstate[\Y,\epshl,\poly[d, \keerr],\size{\X}]\)-state (with the register
  \(H\) in~\cref{def:goodstates} being \(\widehat{H}\)).
  Next, let \(\m_{M} = \sum_i\qi\kb{\mi}_{M}\), we need to show that
  \begin{equation*}
    \forall i, \norm{\Pik \Aop \jointstate_{W'_{\alice}EW\widetilde{H}} \ket{\mi}_M}^2 \geq 1 - \frac{1}{\keerr},
  \end{equation*}
  where \(\Aop\) cannot make queries to \(h\) and only acts on \(W'_{\alice}\) and \(M\).

  Fix \(i\) and write
  \(\jjointstate_{W'_{\alice}EW\widetilde{H}M} \coloneqq \Aop \left(\jointstate_{W'_{\alice}EW\widetilde{H}} \otimes \ket{\mi}_{M}\right)\).

  \begin{claim}
    If \(\jointstate\) is a
    \(\goodstate[\Y,\epshl,\poly[d, \keerr],\size{\X}]\)-state, it follows that
    \(\jjointstate\) is a
    \(\goodstate[\Y,\epshl,\poly[d, \keerr],\size{\X}]\)-state as well.
  \end{claim}
  \begin{proof}
    Assume that \(\jointstate_{RH}\) is a
    \(\goodstate[\Y,\epshl,\poly[d, \keerr],\size{\X}]\)-state.
    Then, \(\jointstate_{RH} \otimes \ket{\mi}_{M}\) is also a
    \(\goodstate[\Y,\epshl,\poly[d, \keerr],\size{\X}]\)-state, because this
    property only depends on the $H$ register, who is unchanged there.
    Then, since \(\Aop\) makes no query to the random oracle, the oracle
    register $H$ is not modified and thus \(\jjointstate\) is a \(\goodstate[\Y,\epshl,\poly[d, \keerr],\size{\X}]\)-state.
  \end{proof}

  We are going to show that there exists a key \(\key' = b \in \{0,1\}\) such that
  the probability of the key \(b\) in the key distribution of \(\jjointstate\)
  is larger than \(1 - \keerr\).
  By contradiction, assume that for both \(b=0\) and \(b=1\), we have that the
  probability of this key is smaller than \(1 - \keerr\).
  By considering the complementary events, we have that:
  \begin{align*}
    &\norm{\Pik[0] \jjointstate}^2 \geq \keerr, \textnormal{ and }\\
    &\norm{\Pik[1] \jjointstate}^2 \geq \keerr.
  \end{align*}
  Let \(\jjointstate[, \key'=b]\) be the residual state conditioned on the key
  equal to \(b\).
  Then, it follows that \(\jjointstate[, \key'=b]\) is a
  \(\goodstate[\Y, \epshl / \keerr, \poly[d, 1 / \keerr], \size{\X}]\)-state
  for both \(b=0\) and \(b=1\) because
  \begin{enumerate}
    \item \(\jjointstate[, \key'=b]\) is \(\poly[d, \frac{1}{\keerr}]\)-sparse since
      \(\jjointstate\) is \(\poly[d, \frac{1}{\keerr}]\)-sparse and
      \(\supp^{\widetilde{H}}(\jjointstate[, \key'=b]) \subseteq \supp^{\widetilde{H}}(\jjointstate)\).
    \item \(\jjointstate[, \key'=b]\) is \(\epshl/\keerr\)-light because:
    \begin{align*}
      \pr[\text{Not measuring \(\hat{0}\) in
      \(\jjointstate[, \key'=b]\)}] &= \pr\left[\text{Not measuring \(\hat{0}\) in
                                      \(\jjointstate[, \key'=b]\)} \given{\key' = b} \right] \\
                                    &= \frac{\pr\left[\text{Not
                                      measuring \(\hat{0}\) in \(\jjointstate\) and
                                      } \key' = b \right]}{\pr[\key'=b]} \\
                                    & \leq \frac{\pr\left[\text{Not measuring
                                      \(\hat{0}\) in
                                      \(\jjointstate\)}\right]}{\pr[\key'=b]} \\
                                    & \leq \epshl/\keerr,
    \end{align*}
      where the last inequality comes from the fact that \(\jjointstate\) is
      \(\epshl\)-light and
      \(\pr[\key'=b] = \norm{\Pik[b] \jjointstate}^2 \geq \keerr\).
  \end{enumerate}

  Then \Cref{conjecture:pcc} implies that the states \(\jjointstate[, \key'=0]\)
  and \(\jjointstate[, \key'=1]\) are compatible, which means that there exists
  two different states \(w^{0}, w^{1} \in W'_{\alice}EW\) and an oracle
  \(\hat{h}\) such that \(\hat{h}\) is consistent with \(w^{0}\) and \(w^{1}\).
  And for this specific oracle, \(w^{0}\) outputs the key \(0\) and \(w^{1}\)
  outputs the key \(1\), both with non-zero probability.
  However, Bob's key has already been computed by \Cref{lemma:bobmeasures}, and
  is fixed to some \(\kB \in \{0,1\}\).
  Thus, there is an oracle such that Bob outputs \(\kB\).
  Plus, for this specific oracle, Alice outputs key \(0\) with non-zero
  probability, and outputs key \(1\) with non-zero probability as well.
  Hence there is an execution of the protocol such that Bob outputs \(\kB\) and
  Alice outputs \(\kA = 1 - \kB\), which breaks the perfect completeness of the
  protocol.

  We now show that the key computed by Eve is
  the same as hypothetical Alice's key, defined by
  \(\khA = \Aop' \left(\ket{\realA}_{AH} \otimes \ket{\mi}_{M}\right)\), that is
  the key that Alice would have computed if the protocol had continued
  normally.
  Since the protocol is perfect, we have that \(\khA = \kB\).
  Recall that Eve's key is computed from the state
  \(\jjointstate = \Aop \left(\jointstate_{W'_{\alice}EW\widetilde{H}} \otimes \ket{\mi}_{M}\right)\),
  and the state \(\jointstate_{W'_{\alice}EW\widetilde{H}}\) is a superposition
  of all of Alice's internal states that are consistent with Eve's view so far.
  The state $\ket{\mi}_{M}$ corresponds to the real message that Bob sent to
  Alice.
  First note that Alice will never output \(\kA = \bot\), because the state of
  Eve consists of a superposition of Alice's states that are consistent with the
  transcript.
  Indeed, if the message \(\ket{\mi}_{M}\) from Bob is inconsistent with the
  oracle, Alice is not able to detect it as she does not query the oracle in
  \(\Aop\).
  Note that the real oracle used in the protocol is one of the oracles in the superposition of oracles that are consistent with Eve's view.
  Also, for a fixed oracle, the key is computed deterministically by the perfectness of the protocol, and thus Eve's key is equal to Bob's hypothetical key with probability \(\norm{\Pik\jjointstate}^{2}\) over the random oracles.
  This shows that Eve succeeds with probability at least \(1 - \keerr\).

\end{proof}

\subsubsection{Part 2: Making Alice Agrees on the Same Key as Bob}
\label{section:breakingccqc}

Using \Cref{lemma:simulation}, we can now prove \Cref{lemma:breakingccqc}.

\begin{proof}[Proof of \Cref{lemma:breakingccqc}]
  Let \((\alice,\bob)\) be a CC1QM-KA protocol where Alice and Bob make at most
  \(d\) queries to an oracle \(h:\mathcal{X} \rightarrow \mathcal{Y}\), and
  Alice does not query the oracle after receiving the last message.
  Consider the following construction for \(\eve\):
  \begin{construction}
    \label{construction:CC1QM-KA}
    Input: \(\keeps, \keerr\)
    \begin{enumerate}
      \item \(\eve\) applies the quantum \(\epshl\)-heavy query learner of
      \Cref{construction:simulation} to compute a state
      \(\ket{\simulatedA}_{W'_{A}}\ket{\simulatedB}_{W'_{B}}\ket{\h}_{H}\) which
      corresponds to a simulation of the internal state of Alice and Bob after
      the classical communication part of the protocol.
      \item Let \(\Aop\) be the operations that Alice applies at the end of the
      protocol after receiving the message \(\m\) from Bob.
      Then, \(\eve\) outputs the resulting key \(k_E\) of
      \(\Pik \Aop \kb{\simulatedA} \kb{\h} \m (\Aop)^{\dagger}\), where \(\m\)
      is the quantum message Bob sends to Alice.
      \item Writing
      \(\simulatedoutput_{EM} = \frac{\simulatedoutputu}{\norm{\simulatedoutputu}}\),
      where
      \(\simulatedoutputu = (\Aop)^{\dagger} \Pik \Aop \kb{\simulatedA} \kb{\h} \m\), \(\eve\) sends the resulting state \(\Tr_E\left(\simulatedoutput_{EM}\right)\) to Alice, where she traces out everything but the register that contains the message.
    \end{enumerate}
  \end{construction}
  In the last part of the construction, \(\eve\) applies the operator
  \(\left(\Aop\right)^{\dagger}\) to uncompute Alice's operation before sending
  her state to Alice.
  Note that in step 2, we use the fact that Alice and Bob's states are
  unentangled, as shown by \Cref{lemma:independency}

  Now, we prove that \Cref{construction:CC1QM-KA} succeeds with probability at least \(1 - \keerr\).
  Using \Cref{lemma:simulation}, we have that \(\eve\) finds the right key \(k\) in Step \(2\) with probability at least \(1 - \keerr\).
  Writing \(\psi = \sum_{i} \qi \ket{\mi}\), this means that
  \begin{equation}
    \label{eq:find}
    \forall i, \norm{\Pik \Aop \ket{\simulatedA} \ket{\h} \ket{\mi}}^2 \geq 1 - \lambda.
  \end{equation}

  We write \(\simulatedm = \Tr_E\left(\simulatedoutput_{EM}\right)\) the message
  that Eve sends to Alice.
  The first thing that we want to show is that the message \(\simulatedm\) from \(\eve\)
  is ``close'' to the real message \(\m\) from Bob.
  More precisely, we will show that:
  \begin{equation}
    \label{eq:simulatedm}
    \forall i, \bra{\mi}\simulatedm \ket{\mi} \geq 1 - \lambda.
  \end{equation}

  For every \(i\), we have that:
  \begin{align*}
    \bra{\mi}\simulatedm\ket{\mi} &=\bra{\mi} \Tr_E\left(\simulatedoutput_{EM}\right) \ket{\mi}\\
                                  &=\bra{\mi} \Tr_E\left(\frac{\simulatedoutputu}{\norm{\simulatedoutputu}}\right) \ket{\mi}\\
                                  &\geq \bra{\mi} \Tr_E\left(\simulatedoutputu  \right) \ket{\mi}\\
                                  &= \Tr \left( \kb{\mi} \Tr_E \left( \simulatedoutputu \right) \right)\\
                                  &= \Tr \left( I_E \otimes \bra{\mi}_M \simulatedoutputu  \cdot I_E \otimes \ket{\mi}_M \right)\\
                                  &\geq \Tr \left( \bra{\simulatedA} \bra{\h} \otimes \bra{\mi}_M \simulatedoutputu \cdot \ket{\simulatedA} \ket{\h} \otimes \ket{\mi}_M \right),
  \end{align*}
  where we used elementary properties of the trace operator.
  Next, we have that
  \[
    \Tr \left( \bra{\simulatedA} \bra{\h} \otimes \bra{\mi}_M \simulatedoutputu \cdot \ket{\simulatedA} \otimes \ket{\mi}_M \right) = \bra{\simulatedA} \otimes \bra{\mi}_M \simulatedoutputu \cdot \ket{\simulatedA} \ket{\h} \otimes \ket{\mi}_M,
  \]
  since the right term is a pure state.
  Replacing \(\simulatedoutputu\) with its value, we have:
  \begin{align*}
    \bra{\mi}\simulatedm\ket{\mi} &\geq\bra{\simulatedA} \bra{\h} \bra{\mi} \left((\Aop)^{\dagger} \Pik \Aop \kb{\simulatedA} \kb{\h} \m \right) \ket{\simulatedA} \ket{\h} \ket{\mi} \\
                                  &= \sum_{j} \qi[j] \bra{\simulatedA} \bra{\h} \bra{\mi} \left((\Aop)^{\dagger} \Pik \Aop \kb{\simulatedA} \kb{H} \kb{\mi[j]} \right) \ket{\simulatedA} \ket{\h} \ket{\mi} \\
                                  &= \bra{\simulatedA} \bra{\h} \bra{\mi} \left((\Aop)^{\dagger} \Pik \Aop \ket{\simulatedA} \ket{\h} \ket{\mi} \right) \\
  &=\norm{(\Aop)^{\dagger}\Pik \Aop \ket{\simulatedA}\ket{\h}\ket{\mi}}^2\\
  &= \norm{\Pik \Aop \ket{\simulatedA}\ket{\h}\ket{\mi}}^2 \\
  &\geq 1 - \lambda,
  \end{align*}
  where the last inequality comes from \Cref{eq:find}.

  Now fix \(i\).
  We write \(\ket{\realoutput} = \Aop \ket{\realA} \otimes \ket{\h} \otimes \ket{\mi}\) where \(\ket{\realA}\) corresponds to Alice's \emph{real} register before receiving the message \(\m\).
  Since \(\Pik\) is a projector and \(\Pik\ket{\realoutput} = \ket{\realoutput}\) from perfect correctness, we can write it:
  \[
  \Pik = \kb{\realoutput} + \sum_i \ketbra{\sigma_i}{\sigma_i},
  \]
  where the \(\sigma_i\) are such that \(\bk{\sigma_{i}}{\realoutput} = 0\).

  We write:
  \[
  \simulatedm = \alpha \kb{\mi} + \beta \rho,
  \]
  where \(\rho = \sum_j p_j \ketbra{\Psi_j}{\Psi_j}\) is a mixed state such  that \(\bra{\mi}\rho\ket{\mi} = 0\).

  For every \(\ket{\mi}\), we have that:
  \begin{align*}
    &\Tr\left(\Pik \Aop (\kb{\realA} \otimes \kb{\h} \otimes \simulatedm) \left(\Aop\right)^{\dagger}\right) \\
    &= \Tr\left(\left(\kb{\realoutput} + \sum_i \ket{\sigma_i}\bra{\sigma_i}\right) \left(\Aop \kb{\realA} \otimes \kb{\h} \otimes \left(\alpha \kb{\mi} + \beta \rho\right) \left(\Aop\right)^{\dagger}\right)\right) \\
    &= \Tr\left(\left(\kb{\realoutput} + \sum_i \ket{\sigma_i}\bra{\sigma_i}\right) \left(\alpha\kb{\realoutput} + \beta \Aop \kb{\realA} \otimes \kb{\h} \otimes \rho \left(\Aop\right)^{\dagger} \right)\right) \\
    &= \alpha \braket{\realoutput|\realoutput}^2 +
      \beta
      \bra{\realoutput}
      \Aop \kb{\phi_A} \otimes \kb{\h} \otimes \rho(\Aop)^\dagger \ket{\realoutput} \\
    & \quad+ \alpha \sum_i |\braket{\realoutput|\sigma_i}|^2 +
      \beta\sum_i \bra{\sigma_i} \Aop \kb{\realA} \otimes \kb{\h} \otimes \rho (\Aop)^\dagger \ket{\sigma_i}
    \\
    &= \alpha +
      \beta\sum_i \bra{\sigma_i} \Aop \kb{\realA} \otimes \kb{\h} \otimes \rho (\Aop)^\dagger \ket{\sigma_i} \\
    &\geq \alpha \\
    &\geq 1 - \lambda,
  \end{align*}
  where the fourth equality comes from the fact that
  \begin{align*}
    \bra{\realoutput}
    \Aop \kb{\realA}\otimes \rho
    (\Aop)^\dagger\ket{\realoutput}
    =
    \bra{\realA}\bra{\mi}(\Aop)^\dagger
    \Aop \kb{\realA}\otimes \rho (\Aop)^\dagger \Aop \ket{\realA}\ket{\mi}  =
    \bra{\mi}\rho\ket{\mi} = 0,
  \end{align*}
  and that \(\bk{\sigma_{i}}{\realoutput} = 0\).
  The first inequality comes from the fact that the terms in the sum are positive,
  because they correspond to the probability of measuring the state
  \(\kb{\realA} \otimes \kb{\h} \otimes \rho \) using the projection
  \(\Aop \braket{\sigma_{i}}{\Aop}^\dagger\), and the last inequality comes from
  \Cref{eq:simulatedm}.

  This means that Alice measures the key \(k\) with probability at least \(1-\lambda\)
  when receiving the message \(\simulatedm\) from \(\eve\) and for pure message
  \(\ket{\mi}\), and if this is the case the meet-in-the-middle attack is a success.
  Since this is true for all of the \(\ket{\mi}\), it also follows for \(\m\) by
  convexity.
  This concludes the proof.

\end{proof}

\subsection{Impossibility of quantum public-key encryption with classical keys}
\label{sec:pke}

In this section, we show that the (conditional) impossibility of CC1QM-KA protocols
proven above also implies a (conditional) impossibility for quantum public key
encryption (qPKE) with classical public keys, but ciphertexts can be quantum states.

We first define the notion of qPKE with classical keys which is modified from
the notion of qPKE with quantum keys given in~\cite{BGHMSVW23}, then prove our
impossibility.

\begin{definition}[Public-key encryption with classical public keys]
\label{def:qpke}
Public-key encryption with classical public keys (qPKE) consists of three
algorithms with the following syntax:
\begin{itemize}[label=\(\bullet\)]
  \item \((\pk, \sk) \gets \qgen(1^{\secpar}):\) a quantum algorithm, which takes as
    input the security parameter and outputs a classical key pair
    \((\pk, \sk)\).
  \item \(\qc \gets \qenc(\pk,m)\): a quantum algorithm, which takes as input a
    classical public key \(\pk\), a plaintext \(m\), and outputs a possibly
    quantum ciphertext \(\qc\).
  \item \(m / \bot \gets \qdec(\sk, \qc)\): a quantum algorithm, which takes as input a
    decryption key \(\sk\), a ciphertext \(\qc\), and outputs a classical
    plaintext \(m\) or a distinguished symbol \(\bot\) indicating decryption
    failure.
  \end{itemize}
  Furthermore, in the quantum random oracle model, we allow these algorithms to
  make quantum queries to a random function \(H\).
  We also allow these algorithms to be inefficient, but can only make at most
  polynomially (in the security parameter) many queries to the random oracle.
\end{definition}
We say that a qPKE scheme is \emph{perfectly correct} if for every message
\(m \in \bin^*\) and any security parameter \(\secpar \in \NN\), the following
holds:
\[
\Pr\left[
\qdec(\sk, \qc) = m
\ \middle\vert
\begin{array}{c}
  (\pk, \sk) \gets \qgen(1^{\secpar}) \\
  \qc \gets \qenc(\pk, m)
\end{array}
\right] = 1,
\]
where the probability is taken over the randomness of \(\qgen\), \(\qenc\) and
\(\qdec\).

We next define indistinguishability security of one-bit qPKE
in~\cref{def:qpke-ow}.
When considering one-bit encryption this notion coincides with the one-way
security notion, which is considered the weakest security notion of
encryption.
Thus, using this notion makes our negative result stronger.

\begin{definition}
  \label{def:qpke-ow}
  A one-bit qPKE scheme with classical public keys is IND-CPA secure if for
  every \(\qpt\) adversary \(\adv\), for any \(\secpar \in \NN\), there exists a
  negligible function \(\negl\) such that
  \begin{equation*}
    \prob{\mathtt{IND-CPA}(\secpar, \adv) = 1} \leq \frac{1}{2} + \negl,
  \end{equation*}
  where \(\mathtt{IND-CPA}(\secpar, \adv)\) is the following experiment:
  \begin{enumerate}
    \item The challenger chooses a random key pair
      \((\pk, \sk) \gets \qgen(1^{\secpar})\), and sends \(\pk\) to the adversary
      \(\adv\).
    \item \(\adv\), upon receiving the public key \(\pk\), sends
      two bits \(m_{0}, m_{1} \in \bin\) to the challenger.
    \item The challenger samples a random bit \(b \sample \bin\), and sends
    \(\qc \gets \qenc(\pk, m_{b})\) to \(\adv\).
    \item \(\adv\) responds with a guess \(b'\) for \(b\).
    \item The challenger outputs \(1\) if \(b' = b\), and \(0\) otherwise.
  \end{enumerate}
\end{definition}

Since the existence of an IND-CPA secure qPKE scheme with classical public keys
in the QROM implies the existence of a CC1QM-KA protocol in the QROM, we also
obtain the following result.
\begin{corollary}
  Assuming \Cref{conjecture:pcc} is true, there is no IND-CPA secure qPKE scheme with
  classical public keys in the QROM, where the decryption algorithm does not
  query the random oracle.
\end{corollary}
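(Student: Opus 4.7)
The plan is to reduce the existence of an IND-CPA secure qPKE with classical public keys (and a non-querying decryption algorithm) to the existence of a CC1QM-KA protocol satisfying the hypotheses of \Cref{theorem:mainka}, and then invoke that theorem to derive a contradiction. Concretely, given a qPKE scheme $(\qgen, \qenc, \qdec)$, I would define a two-party protocol $(\alice, \bob)$ as follows: Alice runs $(\pk, \sk) \gets \qgen(1^{\secpar})$ and sends the classical string $\pk$ to Bob; Bob samples a uniformly random bit $\key \sample \bin$, computes the (possibly quantum) ciphertext $\qc \gets \qenc(\pk, \key)$ and sends $\qc$ to Alice as the last message; finally, Alice recovers $\key_{\alice} \gets \qdec(\sk, \qc)$. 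This fits exactly the CC1QM template: the only non-classical message is the last one (Bob to Alice), and Alice's post-reception computation is $\qdec$, which by hypothesis makes no queries to the random oracle.

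Next I would verify the preconditions of \Cref{theorem:mainka}. Perfect completeness of the KA protocol follows directly from perfect correctness of the qPKE scheme: Alice always recovers $\key_{\alice} = \key = \key_{\bob}$, so the protocol never aborts when both parties are honest. The query complexity of the protocol is bounded by the query complexity of $\qgen$ and $\qenc$, which is polynomial in $\secpar$ by assumption. Hence, under \Cref{conjecture:pcc}, \Cref{theorem:mainka} produces an adversary $\eve$ that makes $\poly[d, \size{\yspace}]$ classical queries to the random oracle and, given the transcript $(\pk, \qc)$, outputs a guess $\key_{\eve}$ that equals the shared key $\key$ with probability at least $0.8$.

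From this KA attacker I would build an IND-CPA adversary $\adv$ against the qPKE scheme in the obvious way: on input $\pk$, submit the challenge messages $(m_{0}, m_{1}) = (0, 1)$; upon receiving the challenge ciphertext $\qc$, run $\eve$ on the transcript $(\pk, \qc)$ to obtain a bit $\key_{\eve}$, and output $b' \coloneqq \key_{\eve}$. Since the challenge bit $b$ is exactly the plaintext encrypted by Bob in the KA reduction, we have $\prob{b' = b} \geq 0.8$, which exceeds $1/2$ by a non-negligible amount, contradicting IND-CPA security. Note that $\eve$ is computationally unbounded but makes only polynomially many queries to the random oracle, which suffices to break the IND-CPA game in the QROM as formulated in \Cref{def:qpke-ow}, since the adversary class there is only query-bounded for the purposes of this black-box separation (standard black-box translation argument).

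The proof is essentially a syntactic reduction, so I do not expect any technical obstacle: the only point requiring a touch of care is ensuring that the hypothesis \emph{``Alice does not query the oracle after receiving the last message''} in \Cref{theorem:mainka} is matched by the hypothesis \emph{``the decryption algorithm does not query the random oracle''} in the corollary, and that the perfect correctness of qPKE translates to the perfect completeness required by \Cref{def:CC1QM-KA}. Both hold by construction, so the reduction goes through cleanly.
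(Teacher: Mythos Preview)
Your proposal is correct and follows essentially the same approach as the paper: both reduce qPKE to a two-message CC1QM-KA protocol (Alice sends $\pk$, Bob encrypts a random bit and sends the ciphertext, Alice decrypts), check that perfect correctness and the non-querying decryption hypothesis transfer, and then invoke \Cref{theorem:mainka} for the contradiction. You are slightly more explicit than the paper in spelling out the IND-CPA adversary built from the KA attacker and in noting that the relevant adversary class is query-bounded rather than time-bounded, but the underlying argument is identical.
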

\begin{proof}
  By contradiction, let \(\Pi = (\qgen, \qenc, \qdec)\) be a qPKE scheme with
  classical public keys and assume it is IND-CPA secure.
  We construct a two-message one-bit CC1QM-KA protocol \(\tilde{\Pi}\), where
  the first message from Alice to Bob is classical and the second message from
  Bob to Alice is quantum, as follows.
  \begin{enumerate}
    \item Alice generates \((\pk, \sk) \leftarrow \kgen(1^{\secpar})\), and sends \(\pk\)
      to Bob.
    \item Bob generates uniformly at random a secret key \(\key \in \bin\) and
      computes \(\qc \gets \qenc(\pk, \key)\), and sends \(\qc\) to Alice.
    \item Alice recovers the common key by computing \(\key \gets \qdec(\sk, \qc)\).
  \end{enumerate}
  It is easy to see that \(\tilde{\Pi}\) is a secure CC1QC-KM protocol in the
  QROM if \(\Pi\) is IND-CPA secure.
  Furthermore, if \(\Pi\) is perfectly correct, \(\tilde{\Pi}\) is also
  perfectly correct.
  Finally, if \(\qdec(\cdot,\cdot)\) does not query the oracle, then Alice in
  the last step of \(\tilde{\Pi}\) does not query the oracle as well.
  This contradicts~\cref{theorem:mainka} and concludes our proof.
\end{proof}

\begin{remark}
  We note that our impossibility of CC1QM-KA is the strongest possible
  impossibility (conditioned on the assumption that~\cref{conjecture:pcc} is
  true), in the sense that the adversary can find the shared key and
  maintain the correctness of the protocol (that is, Alice and Bob can still
  find the shared key), while the usual security definition only asks the
  adversary to find the key of one of two parties.
  This strong impossibility allows us to rule out the possibility of qPKE in the
  QROM with stronger requirements, for example, qPKE with decryption error
  detectability as defined in~\cite{KMNY23}.
\end{remark}

\ifdraft
\else
\section*{Acknowledgments}
This work is part of HQI initiative (www.hqi.fr) and is supported by France 2030 under the French National Research Agency award number ANR-22-PNCQ-0002.

ABG is supported by ANR JCJC TCS-NISQ ANR-22-CE47-0004  and QHV was also partially supported by the same grant.  

\fi
\bibliographystyle{alpha}
\renewcommand{\doi}[1]{\url{#1}}
\bibliography{./cryptobib/abbrev3,./cryptobib/crypto,references}

\end{document}